\theoremstyle{plain}
\newtheorem{corollary}{Corollary}[section]
\newtheorem{proposition}{Proposition}[section]
\theoremstyle{definition}
\newtheorem{definition}{Definition}[section]
\newtheorem{assumption}{Assumption}[section]
\theoremstyle{remark}
\definecolor{r}{rgb}{1,0,0}
\newcommand*{\be}{\begin{equation}}
\newcommand*{\ee}{\end{equation}}
\newcommand*{\bea}{\begin{eqnarray}}
\newcommand*{\eea}{\end{eqnarray}}
\newcommand*{\bd}{\begin{displaymath}}
\newcommand*{\ed}{\end{displaymath}}
\newcommand*{\nn}{\nonumber\\}
\newcommand*{\mc}{\mathcal}
\newcommand*{\mf}{\mathfrak}
\newcommand*{\mb}{\mathbb}
\newcommand*{\id}{\textrm{id}}
\newcommand*{\1}{\mathbbm{1}}
\def\ba#1\ea{\begin{align}#1\end{align}}
\def\bas#1\eas{\begin{align*}#1\end{align*}}
\title{Spacetime granularity from finite-dimensionality\\of local observable algebras}
\author{{\bf Matti Raasakka}\vspace{5pt}\\\tt{matti.raasakka@aalto.fi}\vspace{5pt}\\\small Micro and Quantum Systems group\\\small Department of Electronics and Nanoengineering\\\small School of Electrical Engineering\\\small Aalto University}
\date{\today}
\begin{document}

\maketitle

\begin{abstract}
There are important indications that nature may be locally finite-dimensional, i.e., that any spatially bounded subsystem can be described by a finite-dimensional local observable algebra. Motivated by these ideas, we show that operational spacetime topology is described by an atomistic Boolean algebra if (i) local observable algebras are finite-dimensional factors, (ii) the intersection of two local algebras is also local, and (iii) the commutant of a local algebra is also local. Thus, in this case, spacetime has a point-free granular behavior at small scales.
\end{abstract}

\section{Introduction}
The aim of this paper is to show that, by rather general assumptions, local finite-dimensionality of physics leads to modifications of spacetime topology at small scales. By local finite-dimensionality we mean that physics in any bounded spacetime region $\mc{O}$ can described in terms of a finite-dimensional Hilbert space of states $\mc{H}_\mc{O} \cong \mb{C}^n$, $n\in\mb{N}$. Accordingly, also the local observable algebra $\mf{A}_\mc{O} \subset B(\mc{H}_\mc{O})$ associated to any bounded spacetime region is a finite-dimensional factor, and thus isomorphic to a full matrix algebra.

There are several reasons to suspect that nature should be fundamentally locally finite-dimen-sional, even though quantum field theory (QFT) is not. (See, e.g., \cite{Bao17} for a recent argument.) The high energy divergencies of QFT suggest that we should think of QFT as an effective theory to be replaced by some other model of physics in the deep UV. Gravity becomes, of course, relevant at the Planck scale, which will necessarily modify the theory. The most basic motivation for local finite-dimensionality is the belief that it should not be physically possible to store an infinite amount of information into an arbitrarily small spacetime region. This belief is backed up by Bekenstein's bound on the entropy of gravitational systems: The entanglement entropy of the QFT vacuum state restricted to a spatial subregion is always UV divergent, whereas according to Bekenstein's seminal work \cite{Bekenstein73,Bekenstein81} (and later works by others, e.g., \cite{Bousso99,Bousso00}) bounded gravitating systems should be able to carry only a finite amount of entropy. Thus, gravity should somehow regulate the UV behavior of quantum fields, perhaps through a UV cut-off at the Planck scale. Indeed, if the total energy in a spatial region of linear size $l$ exceeds the value $E_{BH} \sim l/G$, the region forms a black hole and thus cannot be observed from the outside. A cut-off to the total energy of bounded systems leads immediately to a locally finite-dimensional theory, because only a finite number of field modes can be excited in this case. If nature is fundamentally locally finite-dimensional, QFT must then arise as an infinite-dimensional approximation to the more accurate finite-but-extremely-high-dimensional model of physics in macroscopic spacetime regions. 

Our motivation to study locally finite-dimensional quantum physics comes more specifically from attemps to understand gravity as an effective phenomenon arising from the statistical properties of QFT. Jacobson and collaborators \cite{Jacobson95,Jacobson12,Jacobson15,Jacobson19} have shown that gravity may emerge from the entanglement first law for quantum field states if the theory has a physical UV cut-off at the Planck scale (and satisfies a number of other physically motivated assumptions), which leads to a finite entanglement entropy of the restricted vacuum state. The gravitational constant $G$ is then related to the entanglement entropy density $\eta$ on spatial 2-surfaces via $G=1/4\eta$. When the UV cut-off is removed, $\eta \rightarrow \infty$ and thus gravitational interactions vanish as $G \rightarrow 0$. A related argument for the necessity of local finite-dimensionality for the emergence of gravity stems from the fact that, due to the Hadamard condition \cite{Fewster13} (and generalizations thereof \cite{Radzikowski96}), finite-energy states in QFT have the same UV divergence structure as the vacuum state. Therefore, finite-energy perturbations of the vacuum cannot change the effective geometry associated with the area law, and gravitational effects \a la Jacobson cannot appear. Finite dimensionality of the local algebras allows physical perturbations to change the effective background geometry of the system associated with the area law. Of course, for any macroscopic region the dimensionality of the local algebra is extremely large, and therefore it takes a highly energetic perturbation to change the entanglement entropy of the system significantly, which may explain the weakness of gravity at macroscopic scales.

A common argument against finite-dimensionality of physics is the implied violation of Lorentz invariance, which might be carried over from the UV into the IR by perturbative corrections \cite{Mattingly05}. However, when we discuss local regions of spacetime (even in Minkowski spacetime), Lorentz transformations cannot be defined inside a local region, since such a region is never preserved under Lorentz transformations. Therefore, Lorentz invariance cannot be required inside a local region. Of course, the descriptions of local regions connected by global Lorentz transformations should still agree in the case of Minkowski spacetime. Despite local finite-dimensionality, the global algebra is still infinite-dimensional, and thus allows for global Lorentz symmetry. In this context it is relevant that a UV cut-off cannot be implemented in a Lorentz invariant manner, since the energy of any excitation can be arbitrarily increased by a Lorentz boost. On the other hand, e.g., the maximal spatial volume of a non-extendible spatial hypersurface inside the local region is invariant under diffeomorphisms. Perhaps the dimensionality of a local system could be related to the maximal spatial volume. Another possibility is the \emph{holographic principle} \cite{Susskind95, Bousso00b}, according to which the local dimensionality is related to the area of the spatial boundary of a region. We will tentatively assume the former option in the following, but our argument for spacetime granularity should be general enough to cover the latter one as well.

Of course, when gravity becomes relevant we should not expect to have global Lorentz symmetry in general, but (at most) local Lorentz covariance in agreement with the equivalence principle. In \cite{Raasakka17} we showed how local Lorenz covariance may appear in the locally finite-dimensional context as transformations between local thermal Hamiltonians: If the local algebras associated to minimal spatial regions are isomorphic to the observable algebra of a qubit (i.e., a 2-by-2 matrix algebra), then local thermal states on any two of these minimal local algebras can be transformed to each other via a unique $SL(2,\mb{C})$ transformation of the thermal Hamiltonian. In this way we can recover a Lorentz connection on the minimal local spacetime regions. In this paper, however, we will focus rather on the topological consequences of local finite-dimensionality. We will stay agnostic about the exact form of the local subalgebras, except for the assumption that they are finite-dimensional factors.

Our discussion in the rest of this paper will rely on the formalism of algebraic QFT, which is fundamentally based on the assignment of algebras of local operators to bounded spacetime regions. However, the existence of local observables in a quantum gravitational theory is highly doubtful \cite{Torre93, Marolf15,Giddings15,Donnelly16}\footnote{I thank Ted Jacobson for bringing \cite{Marolf15} to my attention, and an anonymous referee for pointing out \cite{Torre93}.}. For example, Giddings et al.\ \cite{Giddings15,Donnelly16} have argued that a quantum gravitational theory cannot possess any local observables, as any particle-creating operator will obtain a gravitational dressing, which extends infinitely far and is impossible to neutralize, as soon as gravity is turned on. In particular, they show that if diffeomorphisms are treated as gauge transformations, then there are no gauge-invariant local observables even in the first order in the gravitational coupling. These results are not easy to evade, because the lack of local diffeomorphism-invariant observables is true even in the classical theory \cite{Carlip12}. Indeed, it is intuitively clear that local observables cannot be invariant under arbitrary spacetime transformations. However, here it is important to distinguish between partial and complete observables, as defined by Rovelli \cite{Rovelli02}. There is no actual need to require that the local algebras we consider in this paper be invariant under spacetime transformations. We can allow spacetime transformations to have a non-trivial isomorphic action on the local algebras, mapping them to each other, as is the case also in generally covariant QFT \cite{Brunetti03}. In this case, we are considering algebras of \emph{partial} observables (instead of complete observables), which can be localized even if the theory has diffeomorphism invariance \cite{Rovelli02}. While the local partial observable algebras transform under spacetime transformations, so do the local states, thus keeping the expectation values (the actual predictable experimental data) invariant. On the other hand, the technical arguments in \cite{Torre93, Marolf15,Giddings15,Donnelly16} as we understand them essentially require that some geometric quantities (e.g., the metric) are fundamental dynamical variables in the theory. In contrast, in our view spacetime geometry need not be necessarily directly observable, but it may be possible to understand it as an effective description of the statistical properties of quantum states of matter and radiation (excluding gravity). From this perspective, the non-local aspects of gravitational interactions may arise from the non-local properties of quantum statistics, rather than the non-locality of observables.

In this paper, we mostly draw inspiration from the works \cite{Bannier94,Keyl96,Keyl98}, which develop methods to extract spacetime structure from the net of local operator algebras in the algebraic QFT setting. The main idea of our approach is to assume that the set of observable algebras associated to local regions is somehow already provided to us, which we then use to understand the necessary properties of a compatible spacetime topology. However, the ultimate goal in this direction of research would be to extract spacetime structure directly from the algebraic and statistical properties of observables. (See, e.g., \cite{Bannier94,Keyl96,Keyl98,CorichiRyanSudarsky02,SummersWhite03,BertozziniContiLewkeeratiyutkul10,Aguilar12,Cao16} for a small subset of works in this direction.) In order to guarantee background-independence without explicit diffeomorphism-invariance, it should be possible to formulate the theory in a way that does not directly refer to spacetime geometry, but only to the algebraic and statistical relations between quantum operators, while the effective spacetime geometry is extracted a posteriori. With this goal in mind, in \cite{Raasakka16} we formulated a spacetime-free framework for quantum theory. In particular, there should be some inherent way to define causal relations and locality without the reference to some background geometry. The extraction of locality from dynamics has been explored, for example, by Cotler et al. \cite{Cotler17}, but a totally satisfactory background-independent approach is still lacking. If such an approach was succesful, then the effective notion of locality would be the one to use in the definition of local observable algebras.\\

Let us then summarize the contents of the paper. In Section \ref{sec:aqft} we review the formalism of algebraic QFT, and work out the relationship between spacetime topology and the net of local operator algebras. In Section \ref{sec:findim} we modify the formalism introduced in the previous section by replacing the infinite-dimensional algebras of QFT by finite-dimensional ones, and explore the implications of this change on the associated spacetime topology. We find that spacetime topology must be significantly modified when the finite-dimensionality of local algebras is manifest, presumably at the Planck scale. In particular, there must exist minimal spacetime regions, although continuous spacetime transformations are still possible. Accordingly, Planck scale topology of spacetime turns out have features of both discreteness and continuity. We finish with a summary and some final remarks in Section \ref{sec:summary}.

\section{Algebraic quantum field theory and spacetime topology}\label{sec:aqft}
All observations of spacetime properties are performed in practice by studying the propagation of quantum fields in spacetime. Therefore, the operational information about spacetime geometry must be encoded into the structure of QFT. On the other hand, QFT models are usually built on top of a fixed background geometry. In this section, we will study the exact relationship between spacetime topology and QFT, and show how spacetime topology can be recovered from the algebraic properties of QFT. In particular, we will adopt the view that the physical meaning to a spacetime region is given exactly by the observables localized in that region.

The starting point for the algebraic formulation of QFT is that we associate to any causally convex\footnote{A spacetime region is causally convex if it contains entirely any causal curve between any two of its points.} open spacetime region $\mc{O}$ with compact closure an algebra of operators $\mf{A}_\mc{O}$, the \emph{local observable algebra}, whose self-adjoint elements are the observables localized inside the region $\mc{O}$. (See, e.g., \cite{Fewster19} for a recent accessible review of algebraic QFT, or \cite{Haag96,BarFredenhagen09} for more thorough textbook expositions.) As already mentioned, in QFT the local observable algebras are infinite-dimensional and in physically relevant models, more specifically, hyperfinite type $\text{III}_1$ von Neumann factors \cite{Buchholz87}. As there is only one hyperfinite type $\text{III}_1$ von Neumann factor up to isomorphisms, it is really the inclusion relations of local algebras, which encode the physical properties of a QFT model. The inclusion relations $\mf{A}_{\mc{O}_1}\subset\mf{A}_{\mc{O}_2}$ of the local algebras must obviously reflect spacetime topology, since any observation localized in $\mc{O}_1$ must also be localized inside $\mc{O}_2$ if $\mc{O}_1 \subset \mc{O}_2$. More specifically, if $\mc{O}_1 \subset \mc{O}_2$ is a proper inclusion, then in physically relevant models $\mf{A}_{\mc{O}_1}$ is a proper unital subalgebra of $\mf{A}_{\mc{O}_2}$. The partially ordered set of operator algebras index by spacetime regions is called \emph{the net of local algebras}.

The correspondence between local algebras and spacetime regions is not one-to-one as such, because any two regions with the same causal completion are associated with the same local algebra due to the causal dynamics of the field(s).
\begin{definition}
	The \emph{causal complement} $\mc{O}^c$ of a spacetime region $\mc{O}$ consists of all the points, which cannot be connected to $\mc{O}$ by a causal (i.e., everywhere light- or time-like) curve. The \emph{causal completion} of a spacetime region $\mc{O}$ is obtained as the double-complement $(\mc{O}^c)^c =: \mc{O}^{cc}$. A spacetime region $\mc{O}$ is called \emph{causally complete} if $\mc{O}^{cc} = \mc{O}$.
\end{definition}
\begin{definition}
	A \emph{Cauchy slice} $\Sigma$ of a spacetime region $\mc{O}$ is a spacelike codimension-1 hypersurface in $\mc{O}$, such that any inextendible causal curve in $\mc{O}$ intersects $\Sigma$ exactly once.
\end{definition}
Classically, initial data on any Cauchy slice $\Sigma$ of $\mc{O}$ determines the state of the field system in the whole of $\mc{O}^{cc}$ if the system obeys causal (hyperbolic) evolution equations and $\mc{O}^{cc}$ is globally hyperbolic. On the algebraic level this implies that, due to the dynamical evolution of the system, spacetime regions sharing the same Cauchy slice are associated to the same local algebra. Accordingly, we may restrict to consider causally complete spacetime regions in order to have one-to-one correspondence between spacetime regions and local algebras.

Both the set of causally complete open spacetime regions with compact closure and the set of local operator algebras can be seen to form order-theoretical lattices. The rest of the paper relies significantly on the theory of lattices. For lattice theory basics, we refer the reader to \cite{Birkhoff48,Davey02}.
\begin{definition}
	A \emph{lattice} $\mc{L}$ is a partially ordered set, in which any two elements $A,B\in\mc{L}$ have a least upper bound (\emph{join}) $A\vee B\in\mc{L}$ and a greatest lower bound (\emph{meet}) $A\wedge B\in\mc{L}$, defined in terms of the ordering as
	\ba
		A\vee B &= \inf\{C\in\mc{L} : C\geq A,\ C\geq B\} \,, \nonumber\\
		A\wedge B &= \sup\{C\in\mc{L} : C\leq A,\ C\leq B\} \,.
	\ea
	In a \emph{complete} lattice every subset $\mc{K}\subset\mc{L}$ has a greatest lower bound and a least upper bound.
\end{definition} 
In particular, we will consider the following two lattices:
\begin{definition}
	Let $\mc{O}\subset\mc{M}$ be a causally complete open subset with compact closure of a globally hyperbolic Lorentzian manifold $\mc{M}$. Then, $\mc{L}_\text{cc}(\mc{O})$ is the lattice, whose elements are the causally complete open subsets $\mc{O}_1\subset \mc{O}$. The order-relation in $\mc{L}_\text{cc}(\mc{O})$ is given by $\mc{O}_1 < \mc{O}_2\ \Leftrightarrow\ \mc{O}_1 \subsetneq \mc{O}_2$. The join and the meet are given, respectively, by $\mc{O}_1\vee\mc{O}_2 = (\mc{O}_1\cup\mc{O}_2)^{cc}\in\mc{L}_\text{cc}(\mc{O})$ and $\mc{O}_1\wedge\mc{O}_2 = \mc{O}_1\cap\mc{O}_2\in\mc{L}_\text{cc}(\mc{O})$ for any $\mc{O}_1,\mc{O}_2\in\mc{L}_\text{cc}(\mc{O})$. (The intersection of two causally complete regions is again causally complete, but the same is not true for the union.) $\mc{L}_\text{cc}(\mc{O})$ is a complete lattice with the least element $\emptyset$ and the greatest element $\mc{O}$.
\end{definition}
\begin{definition}
	Let $\mc{O}\subset\mc{M}$ be as above. Then, $\mc{L}_\text{alg}(\mc{O})$ is the lattice, whose elements are the local observable algebras $\mf{A}_{\mc{O}_1}$ associated to the elements of $\mc{L}_\text{cc}(\mc{O})$ through the map $\mf{A}: \mc{O}_1 \mapsto \mf{A}_{\mc{O}_1}$. The order-relation in $\mc{L}_\text{alg}(\mc{O})$ is given by $\mf{A}_{\mc{O}_1} < \mf{A}_{\mc{O}_2}\ \Leftrightarrow\ \mf{A}_{\mc{O}_1} \subset \mf{A}_{\mc{O}_2}$ as a proper unital subalgebra. The join and the meet in $\mc{L}_\text{alg}(\mc{O})$ are given, respectively, by $\mf{A}_{\mc{O}_1}\vee\mf{A}_{\mc{O}_2} = \mf{A}_{(\mc{O}_1\cup\mc{O}_2)^{cc}}\in\mc{L}_\text{alg}(\mc{O})$ and $\mf{A}_{\mc{O}_1}\wedge\mf{A}_{\mc{O}_2} = \mf{A}_{\mc{O}_1\cap\mc{O}_2} \in\mc{L}_\text{alg}(\mc{O})$ for any $\mf{A}_{\mc{O}_1},\mf{A}_{\mc{O}_2}\in\mc{L}_\text{alg}$. $\mc{L}_\text{alg}(\mc{O})$ is a complete lattice with the least element $\mf{A}_\emptyset \cong \mb{C}$ and the greatest element $\mf{A}_\mc{O}$.\footnote{The least element $\mf{A}_\emptyset \cong \mb{C}$ is the algebra generated by the common unit element shared by all the local algebras. The completeness of $\mc{L}_\text{alg}(\mc{O})$ can be shown, e.g., by noticing that $\phi: \mf{B} \mapsto \inf\{ \mf{A}_{\mc{O}_1} \in \mc{L}_\text{alg}(\mc{O}) : \mf{B} \subset \mf{A}_{\mc{O}_1}\}$, where $\mf{B}\subset\mf{A}_\mc{O}$ is any subfactor of type $\text{III}_1$ (not necessarily local), is a closure operation in the lattice of all subfactors of type $\text{III}_1$ ordered by inclusion.}
\end{definition}
\begin{assumption}\label{ass:iso}
	The two lattices  $\mc{L}_\text{cc}(\mc{O})$ and $\mc{L}_\text{alg}(\mc{O})$ are isomorphic, $\mc{L}_\text{cc}(\mc{O}) \cong \mc{L}_\text{alg}(\mc{O})$.
\end{assumption}
Since the map $\mf{A}: \mc{L}_\text{cc}(\mc{O}) \rightarrow \mc{L}_\text{alg}(\mc{O}), \mc{O}_1 \mapsto \mf{A}_{\mc{O}_1}$ is bijective, this assumption essentially requires that $\mc{O}_1 < \mc{O}_2\ \Leftrightarrow\ \mf{A}_{\mc{O}_1} < \mf{A}_{\mc{O}_2}$, so that $\mf{A}$ gives an order-isomorphism between the two lattices. This property that larger causally complete spacetime regions have larger algebras is generally satisfied by all physical QFT models, as far as we know. It is also the mathematical formulation of the idea that a spacetime region is operationally defined by the observables that are localized in it. Therefore, Assumption \ref{ass:iso} seems physically well-motivated.

Drawing inspiration from \cite{Bannier94,Keyl96,Keyl98}, we will now study some details of the recovery of spacetime topological structure from the lattice of local algebras.\\

{\bf Open regions and topology.} Since the causally complete regions form a base for the spacetime topology, an arbitrary open region can be expressed as the union of some set of elements in $\mc{L}_\text{cc}(\mc{O})$, by definition. The meet operation in $\mc{L}_\text{cc}(\mc{O})$, however, is not directly the union of regions, but its causal completion. Therefore, we cannot directly use the meet operation in $\mc{L}_\text{cc}(\mc{O})$ to define arbitrary spacetime regions. Instead, we may identify arbitrary spacetime regions as certain subsets in $\mc{L}_\text{cc}(\mc{O})$. To that end, we need a recall few more basic definitions from lattice theory, and prove a couple of propositions.
\begin{definition}
	Let $\mc{L}$ be a complete lattice with the least element $0$, and $\omega\subset \mc{L}$ a subset of elements. $\omega$ is called a \emph{down-set} in $\mc{L}$ if $A\in \omega$ and $B<A$ imply $B\in \omega$. The down-sets of $\mc{L}$ constitute themselves a complete lattice $\mc{L}_{ds}(\mc{L})$, ordered by inclusion, with the least element $\{0\}$ and the greatest element $\mc{L}$. The meet and the join in $\mc{L}_{ds}(\mc{L})$ are given by the set-theoretic union and intersection, respectively, i.e., $\omega_1\vee \omega_2 = \omega_1 \cup \omega_2$ and $\omega_1\wedge \omega_2 = \omega_1\cap \omega_2$.\footnote{Notice that the union and the intersection of down-sets is again a down-set.}
\end{definition}
\begin{definition}
	Let $\mc{L}$ be a complete lattice with the least element $0$, and $\mc{L}_{ds}(\mc{L})$ the lattice of its down-sets. The \emph{pseudo-complement} $\omega^*\in \mc{L}_{ds}(\mc{L})$ of an element $\omega\in\mc{L}_{ds}(\mc{L})$ is given by
	\ba
		\omega^* = \sup\{\omega'\in\mc{L}_{ds}(\mc{L}) : \omega\cap \omega' = \{0\}\} \,.
	\ea
	The double-pseudo-complementation $\omega\mapsto (\omega^*)^* =: \omega^{**}$ defines a closure operation in $\mc{L}_{ds}(\mc{L})$. The \emph{$**$-closed down-sets} in $\mc{L}_{ds}(\mc{L})$, which satisfy $\omega^{**} = \omega$, form a complete lattice $\mc{L}_{ds}^{**}(\mc{L})$ with the meet $\omega_1 \vee \omega_2 = (\omega_1 \cup \omega_2)^{**}$ and the join $\omega_1 \wedge \omega_2 = \omega_1 \cap \omega_2$.
\end{definition}
\begin{definition}
	A complete lattice $\mc{L}$ is called a \emph{frame} (also a \emph{complete Heyting algebra} or a \emph{locale}, depending on the context) if the distributive law $A \wedge (\vee_i B_i) = \vee_i (A \wedge B_i)$ holds for arbitrary collections of elements $\{B_i\}_i\subset \mc{L}$. An equivalent condition is that $\mc{L}$ satisfies the finite distributive law $A \wedge (B \vee C) = (A \wedge B) \vee (A \wedge C)$ for all $A,B,C\in\mc{L}$, and the map $A \mapsto A\wedge B$ preserves the suprema of directed sets in $\mc{L}$ for all $B\in\mc{L}$.
\end{definition}
\begin{proposition}\label{prop:frm}
	Let $\mc{L}$ be a complete lattice. Then the lattice $\mc{L}_{ds}^{**}(\mc{L})$ of $**$-closed down-sets in $\mc{L}$ is a frame.\footnote{Even though this proposition seems like a basic result in lattice theory, we were not able to find it in the literature. Accordingly, we include the proof here for completeness.}
\end{proposition}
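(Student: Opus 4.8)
The plan is to realize $\mc{L}_{ds}^{**}(\mc{L})$ as the fixed‑point lattice of a \emph{nucleus} — a meet‑preserving closure operator — on the frame $\mc{L}_{ds}(\mc{L})$, and then to invoke the general fact that the fixed points of a nucleus form a frame. The first step is to note that $\mc{L}_{ds}(\mc{L})$ is itself a frame: its elements are subsets of $\mc{L}$ closed under arbitrary unions and intersections, with join given by union and meet by intersection, so it is a complete lattice (as already recorded), and the distributive law $\omega\cap\bigcup_i\omega_i=\bigcup_i(\omega\cap\omega_i)$ holds for arbitrary families of subsets of $\mc{L}$ and hence \emph{a fortiori} for down‑sets. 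Thus $\mc{L}_{ds}(\mc{L})$ satisfies the frame condition.

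The only substantive point is the second step: the closure operator $j\colon\omega\mapsto\omega^{**}$ preserves finite meets, i.e.\ $(\omega_1\cap\omega_2)^{**}=\omega_1^{**}\cap\omega_2^{**}$ for all down‑sets $\omega_1,\omega_2$ (and $j(\mc{L})=\mc{L}$, which is immediate since $j$ is inflationary). The inclusion $\subseteq$ follows from monotonicity of $j$. For $\supseteq$ I would use only two elementary facts about pseudo‑complements: (a) $\omega\cap\omega^*=\{0\}$ for every down‑set $\omega$, and (b) if $\eta$ is a down‑set with $\eta\cap\zeta=\{0\}$ then $\eta\subseteq\zeta^*$. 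By (a), $(\omega_1\cap\omega_2)^*\cap\omega_1\cap\omega_2=(\omega_1\cap\omega_2)^*\cap(\omega_1\cap\omega_2)=\{0\}$, so (b) gives $(\omega_1\cap\omega_2)^*\cap\omega_1\subseteq\omega_2^*$; intersecting with $\omega_2^{**}$ and using (a) once more yields $(\omega_1\cap\omega_2)^*\cap\omega_2^{**}\cap\omega_1=\{0\}$, so (b) gives $(\omega_1\cap\omega_2)^*\cap\omega_2^{**}\subseteq\omega_1^*$; intersecting with $\omega_1^{**}$ and using (a) a third time yields $(\omega_1\cap\omega_2)^*\cap\omega_1^{**}\cap\omega_2^{**}=\{0\}$, whence (b) gives $\omega_1^{**}\cap\omega_2^{**}\subseteq((\omega_1\cap\omega_2)^*)^*=(\omega_1\cap\omega_2)^{**}$. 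This chain — the distributivity of double‑pseudo‑complementation over finite intersection — is where essentially all of the work lies; everything around it is bookkeeping, and I expect this to be the only genuine obstacle.

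The third step assembles the frame structure. Since $j$ is monotone and inflationary, an arbitrary intersection of $**$‑closed down‑sets is again $**$‑closed, so in $\mc{L}_{ds}^{**}(\mc{L})$ the meet is intersection, the greatest element $\mc{L}$ is $**$‑closed, and the join of a family is $\bigvee_i\omega_i=(\bigcup_i\omega_i)^{**}=j(\bigcup_i\omega_i)$ (the least $**$‑closed down‑set above all the $\omega_i$); hence $\mc{L}_{ds}^{**}(\mc{L})$ is a complete lattice. The frame law then follows by the computation $\eta\wedge\bigvee_i\omega_i=\eta\cap j(\bigcup_i\omega_i)=j(\eta)\cap j(\bigcup_i\omega_i)=j(\eta\cap\bigcup_i\omega_i)=j(\bigcup_i(\eta\cap\omega_i))=\bigvee_i(\eta\wedge\omega_i)$, valid for any $**$‑closed $\eta$ and any family $\{\omega_i\}$ of $**$‑closed down‑sets, where the equalities use, respectively, the definitions of meet and join in $\mc{L}_{ds}^{**}(\mc{L})$, the $**$‑closedness of $\eta$, the meet‑preservation of $j$ from the second step, the set‑theoretic distributivity from the first step, and once more the definitions of meet and join. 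This establishes that $\mc{L}_{ds}^{**}(\mc{L})$ is a frame.
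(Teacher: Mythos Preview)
Your proof is correct, but it proceeds along a genuinely different line from the paper's. The paper argues concretely at the level of elements of $\mc{L}$: it first shows that every nonzero $x\in\omega^{**}$ dominates some nonzero $y\in\omega$, deduces from this the equivalence $\omega_1\cap\omega_2=\{0\}\Leftrightarrow\omega_1^{**}\cap\omega_2=\{0\}$, and then obtains the asymmetric identity $\omega_1\cap\omega_2^{**}=(\omega_1\cap\omega_2)^{**}$ for $\omega_1$ already $**$-closed, from which finite distributivity is read off directly. You instead recognise $j=(\cdot)^{**}$ as a nucleus on the frame $\mc{L}_{ds}(\mc{L})$ and prove the symmetric meet-preservation $(\omega_1\cap\omega_2)^{**}=\omega_1^{**}\cap\omega_2^{**}$ by a short chain of pseudo-complement manipulations using only $\omega\cap\omega^*=\{0\}$ and the universal property of $\zeta^*$; the frame law then follows from the standard nucleus/quotient-frame machinery. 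Your argument is more abstract and more general --- it uses nothing specific to down-sets and would go through verbatim in any frame, essentially reproving the Glivenko-type fact that the regular elements of a frame form a frame --- whereas the paper's element-chasing lemma (that $\omega^{**}$ is ``generated from below'' by $\omega$) is particular to $\mc{L}_{ds}(\mc{L})$ but yields an independently useful structural statement about $**$-closure that is reused later in the paper (e.g.\ in Proposition~\ref{prop:atom2}).
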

\begin{proof}
	Let us first show that if $x\in \omega^{**}$ such that $x > 0$, then there exists $y\in\omega$ such that $0<y\leq x$. First we note that $x$ cannot belong to $\omega^*$ if there exists $y\in\omega$ such that $0<y\leq x$, because then also $y\in\omega^*$, which contradicts $\omega\cap\omega^*=\{0\}$. In fact,
	\ba
		\omega^* = \{x\in\mc{L} : \nexists y\in\omega \text{ s.t. } 0<y\leq x\}\,,
	\ea
	since this property defines a down-set, which cannot be further enlarged without violating the condition $\omega\cap\omega^*=\{0\}$. Since $\omega^{**}\cap\omega^*=\{0\}$, the claim follows.
	
	Let us then show that for all $\omega_1,\omega_2\in\mc{L}_{ds}(\mc{L})$
	\ba
		\omega_1 \cap \omega_2 = \{0\}\ \Leftrightarrow\ \omega_1^{**} \cap \omega_2 = \{0\} \,.
	\ea
	As we just showed, for any $x\in\omega_1^{**}$ there exists $y\in\omega_1$ such that $0<y\leq x$. Now, since $\omega_1^{**}\cap \omega_2$ is a down-set, if $x\in \omega_1^{**}\cap \omega_2$ then also $y\in \omega_1^{**}\cap \omega_2$, and consequently $y\in \omega_1\cap \omega_2$. Thus,
	\ba
		\omega_1^{**}\cap \omega_2 \neq \{0\}\ \Rightarrow\ \omega_1\cap \omega_2 \neq \{0\} \,.
	\ea
	By negating this implication, we get $\omega_1 \cap \omega_2 = \{0\}\ \Rightarrow\ \omega_1^{**} \cap \omega_2 = \{0\}$. The implication in the other direction is trivial, as $\omega_1\subset\omega_1^{**}$ and the intersection operation is monotonic.
	
	Let us show next that $\omega_1 \cap \omega_2^{**} = (\omega_1 \cap \omega_2)^{**}$ for all $\omega_1\in\mc{L}_{ds}^{**}(\mc{L})$ and $\omega_2\in\mc{L}_{ds}(\mc{L})$. We get an equivalent statement $(\omega_1 \cap \omega_2^{**})^* = (\omega_1 \cap \omega_2)^{*}$ by taking pseudo-complements on both sides, since the pseudo-complement is unique for elements in $\mc{L}_{ds}^{**}(\mc{L})$. Here,
	\ba
		(\omega_1 \cap \omega_2^{**})^* &= \sup\{\omega'\in\mc{L}_{ds}(\mc{L}) : (\omega_1\cap \omega_2^{**})\cap \omega' = \{0\}\} \,,\nn
		(\omega_1 \cap \omega_2)^{*} &= \sup\{\omega'\in\mc{L}_{ds}(\mc{L}) : (\omega_1\cap \omega_2)\cap \omega' = \{0\}\} \,.\label{eq:closets}
	\ea
	By the previous result
	\ba
		\omega_2^{**} \cap (\omega_1\cap \omega') = \{0\}\ \Leftrightarrow\ \omega_2 \cap (\omega_1\cap \omega') = \{0\} \,.
	\ea
	Thus, the two sets in (\ref{eq:closets}) are the same, and hence $(\omega_1 \cap \omega_2^{**})^* = (\omega_1 \cap \omega_2)^{*}$.
	
	It then follows immediately from $\omega_1 \cap \omega_2^{**} = (\omega_1 \cap \omega_2)^{**}$ that $\mc{L}_{ds}^{**}(\mc{L})$ satisfies the finite distributivity law $\omega_1 \wedge (\omega_2 \vee \omega_3) = (\omega_1 \wedge \omega_2) \vee (\omega_1 \wedge \omega_3)$ for all $\omega_1,\omega_2,\omega_3\in\mc{L}_{ds}^{**}(\mc{L})$:
\ba
	\omega_1 \wedge (\omega_2 \vee \omega_3) &= \omega_1 \cap (\omega_2 \cup \omega_3)^{**} \nn
	&= (\omega_1 \cap (\omega_2 \cup \omega_3))^{**} \nn
	&= ((\omega_1 \cap \omega_2) \cup (\omega_1 \cap \omega_3))^{**} \nn
	&= (\omega_1 \wedge \omega_2) \vee (\omega_1 \wedge \omega_3) \,.
\ea
Since the meet operation in $\mc{L}_{ds}^{**}(\mc{L})$ is just the set-theoretical intersection, and the order relation in $\mc{L}_{ds}^{**}(\mc{L})$ is given by the set-theoretical inclusion, the preservation of suprema is immediate.
\end{proof}

Now, let us define the lattice of open subsets in $\mc{O}$, which captures the topology of $\mc{O}$. Open subsets of a topological space equipped with the set-theoretical union and intersection as the join and the meet operations, respectively, form also a frame. Thus, a frame is often considered to define a topological space in terms of open sets without referring to points \cite{Johnstone82,Picado12}.
\begin{definition}
	The set $\mc{T}(\mc{O})$ of open subsets in $\mc{O}$ constitute a frame when equipped with the ordering relation $O_1 < O_2\ \Leftrightarrow\ O_1 \subsetneq O_2$ for $O_1,O_2\in\mc{T}(\mc{O})$. The join and the meet in $\mc{T}(\mc{O})$ are given by the usual set-theoretical union and intersection, i.e., $O_1\vee O_2 = O_1\cup O_2$ and $O_1\wedge O_2 = O_1\cap O_2$, respectively.
\end{definition}
\begin{proposition}\label{prop:top}
	Let $\mc{O}$ be a causally complete open spacetime region with a compact closure. Let $\mc{T}(\mc{O})$ be as defined above (i.e., the topology of $\mc{O}$), and $\mc{L}_{ds}^{**}(\mc{L}_\text{cc}(\mc{O}))$ the lattice of $**$-closed down-sets of $\mc{L}_\text{cc}(\mc{O})$. Then, $\mc{T}(\mc{O}) \cong \mc{L}_{ds}^{**}(\mc{L}_\text{cc}(\mc{O}))$.
\end{proposition}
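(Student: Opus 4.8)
The plan is to build an explicit order-isomorphism and then let it carry the lattice structure for free. For an open set $O\in\mc{T}(\mc{O})$ set $\Phi(O):=\{\mc{O}_1\in\mc{L}_\text{cc}(\mc{O}):\mc{O}_1\subseteq O\}$, and for a down-set $\omega$ of $\mc{L}_\text{cc}(\mc{O})$ set $\Psi(\omega):=\bigcup_{\mc{O}_1\in\omega}\mc{O}_1$. One checks at once that $\Phi(O)$ is a down-set, that $\Psi(\omega)$ is open, and that both maps are inclusion-monotone. Since the causally complete regions form a base for $\mc{T}(\mc{O})$, every open set is the union of the causally complete regions it contains, whence $\Psi\circ\Phi=\mathrm{id}_{\mc{T}(\mc{O})}$; in particular $\Phi$ is injective and both preserves and reflects the order, so it is an order-isomorphism of $\mc{T}(\mc{O})$ onto its image. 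Because $\mc{T}(\mc{O})$ and $\mc{L}_{ds}(\mc{L}_\text{cc}(\mc{O}))$ are complete and an order-isomorphism of complete lattices automatically preserves arbitrary joins and meets, once the image of $\Phi$ is pinned down the isomorphism of $\mc{T}(\mc{O})$ with that image \emph{as a complete lattice} — with join $(\,\cdot\cup\cdot\,)^{**}$ and meet $\cdot\cap\cdot$ — is formal; no separate check of the lattice operations is needed.

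Thus the entire content reduces to identifying the image of $\Phi$ with $\mc{L}_{ds}^{**}(\mc{L}_\text{cc}(\mc{O}))$. For the inclusion $\mc{L}_{ds}^{**}(\mc{L}_\text{cc}(\mc{O}))\subseteq\Phi(\mc{T}(\mc{O}))$ I would use the description of the pseudo-complement established in the proof of Proposition \ref{prop:frm}: specialized to $\mc{L}_\text{cc}(\mc{O})$ it reads $\omega^*=\{\mc{O}_1:\text{no nonempty }\mc{O}_2\in\omega\text{ satisfies }\mc{O}_2\subseteq\mc{O}_1\}$, and the base property rewrites this as $\omega^*=\{\mc{O}_1\in\mc{L}_\text{cc}(\mc{O}):\mc{O}_1\cap\Psi(\omega)=\emptyset\}=\Phi\big(\mathrm{int}(\mc{O}\setminus\Psi(\omega))\big)$. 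Hence every pseudo-complement — and therefore every $**$-closed down-set — is of the form $\Phi(O)$ for some open $O$, and moreover for an $O$ which is regular open, since $\mathrm{int}(\mc{O}\setminus\Psi(\omega))$ always is.

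The reverse inclusion is the step I expect to be the real obstacle: one must show $\Phi(O)$ is $**$-closed for \emph{every} open $O$. Iterating the formula above gives $\Phi(O)^{**}=\Phi\big(\mathrm{int}\,\overline{O}\,\big)$, so the claim is equivalent to $\mathrm{int}\,\overline{O}=O$ for all open $O\subseteq\mc{O}$, i.e.\ that every open subset of $\mc{O}$ is regular open — which is simply false for a generic topological space (a causal diamond with a point removed is a counterexample, and it is not itself causally complete). So a correct argument must either exploit something specific about the family $\mc{L}_\text{cc}(\mc{O})$ and global hyperbolicity that I do not see forcing this, or — more plausibly — the intended statement concerns the Boolean algebra of regular open sets, in which case the right map is $O\mapsto\Phi(O)$ on regular opens and $O\mapsto\Phi(O)^{**}$ in general. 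I would resolve exactly this point first, since the rest of the proof is essentially bookkeeping once it is settled.
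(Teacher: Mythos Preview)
Your maps $\Phi,\Psi$ are exactly the paper's $\phi,\chi$, and your computation $\Phi(O)^{**}=\Phi(\mathrm{int}\,\overline{O})$ via the pseudo-complement formula is precisely the route the paper takes. The paper arrives at $\omega^{**}=\phi\big(\mathrm{cl}(\mathrm{cl}(\chi(\omega))^\perp)^\perp\big)$ and then asserts this equals $\phi(\chi(\omega))$, from which it concludes $\phi\circ\chi=(\cdot)^{**}$; combined with $\chi\circ\phi=\mathrm{id}$ this would indeed force $\phi(O)^{**}=\phi(O)$ for every open $O$. But that last equality is exactly the step you flag: $\mathrm{cl}(\mathrm{cl}(U)^\perp)^\perp=\mathrm{int}\,\overline{U}$, so the paper is silently assuming $\chi(\omega)$ is regular open, and hence that every open $O=\chi(\phi(O))$ is regular open.

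So you have not missed an idea --- you have located a genuine gap that the paper's own proof glosses over. Your diagnosis is correct on both counts: $\mc{L}_{ds}^{**}$ of any distributive lattice is a complete Boolean algebra, whereas $\mc{T}(\mc{O})$ for a manifold is not, so the isomorphism cannot hold as stated; and the argument goes through cleanly once $\mc{T}(\mc{O})$ is replaced by the Boolean algebra of regular open subsets of $\mc{O}$ (with join $(\cdot\cup\cdot)^{\mathrm{int}\,\overline{\phantom{O}}}$), in which case $\Phi$ restricted to regular opens is the desired order-isomorphism onto $\mc{L}_{ds}^{**}(\mc{L}_\text{cc}(\mc{O}))$. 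Your counterexample of a diamond with a point removed is apt; two diamonds touching along a common boundary facet give another. The paper's subsequent use of the result (recovering points as completely prime filters, and the later finite-dimensional analysis) is unaffected by this correction, since those arguments only need the Boolean/regular-open version.
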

\begin{proof}
	Let $\phi: \mc{T}(\mc{O}) \rightarrow \mc{L}_{ds}(\mc{L}_\text{cc}(\mc{O}))$ map the open spacetime region $O\in \mc{T}(\mc{O})$ to the down-set $\phi(O)\in\mc{L}_{ds}(\mc{L}_\text{cc}(\mc{O}))$, which is the set of causally complete regions contained in the region $O$. $\phi$ is clearly injective (assuming the usual Hausdorff topology on $\mc{O}$). We want to show that $\phi$ is also surjective to the set $\mc{L}_{ds}^{**}(\mc{L}_\text{cc}(\mc{O}))$.
	
	The left-inverse of $\phi$ is given by $\chi: \mc{L}_{ds}(\mc{L}_\text{cc}(\mc{O})) \rightarrow \mc{T}(\mc{O}), \omega \mapsto \cup_{\mc{O}_1\in \omega}\mc{O}_1$, i.e., $\chi\circ\phi = \id_{\mc{T}(\mc{O})}$. We will show in the following that $\phi\circ\chi = (\cdot)^{**}$, which proves that $\phi$ is surjective to $\mc{L}_{ds}^{**}(\mc{L}_\text{cc}(\mc{O}))$. Now, first of all, notice that $\omega_1\cap \omega_2 = \{\emptyset\}\ \Leftrightarrow\ \chi(\omega_1)\cap\chi(\omega_2) = \emptyset$ for all $\omega_1,\omega_2\in\mc{L}_{ds}(\mc{L}_\text{cc}(\mc{O}))$: If $\chi(\omega_1)\cap\chi(\omega_2) \neq \emptyset$, there would be some non-empty causally complete region contained in $\chi(\omega_1)\cap\chi(\omega_2)$, which would belong to $\omega_1\cap \omega_2$. Also, if $\omega_1\cap \omega_2 \neq \{\emptyset\}$, it is clear that $\chi(\omega_1)\cap\chi(\omega_2)\neq\emptyset$. Accordingly, $\phi(\text{cl}(\chi(\omega))^\perp) = \omega^*$, where $\text{cl}(O)^\perp\in\mc{T}(\mc{O})$ is the complement of the closure of $O\in\mc{T}(\mc{O})$: Clearly, $\text{cl}(\chi(\omega))^\perp$ is the largest open set in $\mc{T}(\mc{O})$, which does not overlap with $\chi(\omega)$. Consequently, we get $\omega^{**} = \phi(\text{cl}(\chi(\phi(\text{cl}(\chi(\omega))^\perp)))^\perp) = \phi(\chi(\omega))$.
	
	Finally, it is easy to see that $\omega_1 < \omega_2\ \Leftrightarrow\ \chi(\omega_1) < \chi(\omega_2)$ for any $\omega_1,\omega_2 \in\mc{L}_{ds}^{**}(\mc{L}_\text{cc}(\mc{O}))$, which shows that $\phi$ is a lattice isomorphism.
\end{proof}
Since $\mc{L}_\text{cc}(\mc{O}) \cong \mc{L}_\text{alg}(\mc{O})$, and the lattice $\mc{T}(\mc{O})$ of open subsets of $\mc{O}$ captures the topology of $\mc{O}$, we may recover the topology of $\mc{O}$ as the frame $\mc{L}_{ds}^{**}(\mc{L}_\text{alg}(\mc{O}))$ of $**$-closed down-sets in the lattice $\mc{L}_\text{alg}(\mc{O})$ of local observable algebras of QFT.\\

{\bf Points.} Spacetime points $x\in\mc{O}$ are in one-to-one correspondence with the sets of open spacetime regions $\mc{K}_x = \{O_1 \in \mc{T}(\mc{O}) : x \in O_1\}$. The sets $\mc{K}_x$ can be uniquely characterized order-theoretically as completely prime filters in the frame $\mc{T}(\mc{O})\cong\mc{L}_{ds}^{**}(\mc{L}_\text{cc}(\mc{O}))$ \cite{Johnstone82,Picado12}.
\begin{definition}
	Let $\mc{L}$ be a lattice. A non-empty subset $F\subset\mc{L}$ is called a \emph{filter} if
	\begin{itemize}
		\item[(1)] $F$ is an \emph{up-set}, i.e., $x\in F,\ y\geq x\ \Rightarrow\ y\in F$, and
		\item[(2)] for all $x,y\in F$ there exists $z\in F$ s.t.\ $z\leq x$ and $z\leq y$.
	\end{itemize}
	A filter $F$ is \emph{completely prime} if for any subset $B\subset\mc{L}$ the following implication holds:
	\ba
		\vee B \in F\ \Rightarrow\ \exists x\in B \text{ s.t. } x\in F\,.
	\ea
 \end{definition}
Since $\mc{L}_\text{cc}(\mc{O}) \cong \mc{L}_\text{alg}(\mc{O})$, we may define `spacetime points' just as well as the completely prime filters in $\mc{L}_{ds}^{**}(\mc{L}_\text{alg}(\mc{O}))\cong\mc{T}(\mc{O})$.\\

Remarkably, the above results provide the inverse to the initial construction of the net of local algebras starting from a spacetime region $\mc{O}$ (which could also be the whole spacetime). We can indeed recover the topology of spacetime from the net of local algebras, and thus give it an operationally well-defined meaning. However, the structure of the lattice of local algebras $\mc{L}_\text{cc}(\mc{O})$ is significantly modified in the locally finite-dimensional case, as we will see in the next section, and therefore local finite-dimensionality implies definite modifications to spacetime topology at scales where it is manifest.

\section{Implications of local finite-dimensionality for topology}\label{sec:findim}
Let us now suppose that the local algebras $\mf{A}_\mc{O}$ are, in fact, finite-dimensional. To be more precise, we will assume the following.
\begin{assumption}\label{ass:fin}
	Let $\mc{O}$ be a spacetime region inside which any Cauchy slice has a finite spatial volume. Then the corresponding local observable algebra $\mf{A}_\mc{O}$ is a finite-dimensional factor.
\end{assumption}	
Any local observable algebra is thus isomorphic to the algebra of $n$-by-$n$ complex matrices, from hereon denoted by $\mb{M}_n$, for some $n \in\mb{N}$. We will again consider the ordering relation
\ba
	\mf{A}_1 < \mf{A}_2\ \Leftrightarrow\ \mf{A}_1 \subset \mf{A}_2\ \text{as a proper unital subalgebra}
\ea
for the local algebras. As already argued above, the physical meaning to a spacetime region is given by the observations that can be made inside that region. Therefore, we expect Assumption \ref{ass:iso} to remain valid, and postulate that $\mc{L}_\text{cc}(\mc{O}) \cong \mc{L}_\text{alg}(\mc{O})$ still holds for the lattice $\mc{L}_\text{cc}(\mc{O})$ of causally complete spacetime subregions, which provides a base for the topology of $\mc{O}$. Thus, the topology of $\mc{O}$ may still be obtained as the frame $\mc{L}_{ds}^{**}(\mc{L}_\text{alg}(\mc{O}))$.

\subsection{Lattice of all subfactors}
What kind of modifications does the change to finite-dimensional factors imply for the lattice of local algebras? To set the stage, let us first consider the lattice $\mc{L}_{sub}(\mf{A})$ of all subfactors of a finite-dimensional factor $\mf{A} \cong \mb{M}_n$. Some basic properties of the lattice of \emph{local} subfactors will follow directly from the properties of $\mc{L}_{sub}(\mf{A})$ together with some simple physically motivated assumptions about the local subfactors. Notice that even though the structure of $\mc{L}_{sub}(\mf{A})$ is fairly easy to understand, it is still not totally trivial. For example, it is not a finite lattice, since there are continuous families of subfactors given by unitary transformations.

Let us recall further basic definitions from lattice theory.
\begin{definition}
	Let $\mc{L}$ be a complete lattice with the greatest element $1$ and the least element $0$. $\mc{L}$ has \emph{finite length} if any chain of elements $0 < A_1 < A_2 < \ldots < 1$ consists of only a finite number of elements.
\end{definition}
\begin{definition}
	Let $\mc{L}$ be a complete lattice with the greatest element $1$ and the least element $0$. $\mc{L}$ is \emph{complemented} if every element $A\in\mc{L}$ has a complementary element $A'\in\mc{L}$ such that $A\vee A' = 1$ and $A\wedge A' = 0$.
\end{definition}
\begin{definition}
	An \emph{atomic} element $A\in\mc{L}$ of a lattice $\mc{L}$ with the least element $0$ is such that there does not exist another element $B \in \mc{L}$ such that $0 < B < A$. A lattice $\mc{L}$ is called \emph{atomic} if for every non-atomic element $B \in \mc{L}$ there exists an atomic element $A\in\mc{L}$ such that $A<B$. A lattice is called \emph{atomistic} if every element $B \in \mc{L}$ is the join of a set of atomic elements.
\end{definition}

Next we explore some basic properties of $\mc{L}_{sub}(\mf{A})$.
\begin{proposition}
	The subfactors of a finite-dimensional factor $\mf{A}$ form a complete atomistic lattice $\mc{L}_{sub}(\mf{A})$ of finite length, when equipped with the ordering relation
	\ba
		\mf{A}_1 < \mf{A}_2\ \Leftrightarrow\ \mf{A}_1 \subset \mf{A}_2\ \textrm{\emph{as a proper unital subalgebra}}
	\ea
	for any two subfactors $\mf{A}_1,\mf{A}_2 \subset \mf{A}$.
\end{proposition}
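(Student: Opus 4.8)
The plan is to establish the three asserted properties — finite length, completeness as a lattice, and atomisticity — using the structure theory of finite-dimensional $*$-algebras, in which every subfactor $\mf{B}$ of $\mf{A}\cong\mb{M}_n$ is, after an inner automorphism, a ``tensor slot'' $\mb{M}_k\otimes\1$ inside $\mb{M}_k\otimes\mb{M}_m\cong\mb{M}_n$ (so that $k\mid n$), with relative commutant $\mf{B}^c:=\mf{B}'\cap\mf{A}=\1\otimes\mb{M}_m$. Finite length is then immediate: a proper unital inclusion $\mf{A}_1\subsetneq\mf{A}_2$ of subfactors strictly increases the dimension, so any chain $\mb{C}\1=\mf{A}_0<\mf{A}_1<\dots<\mf{A}$ has length at most $\dim\mf{A}=n^2$ (more sharply, writing $\mf{A}_i\cong\mb{M}_{k_i}$ one has $k_i\mid k_{i+1}$ and $k_i<k_{i+1}$, bounding the length by the number of prime factors of $n$ counted with multiplicity). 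In particular the least element is $\mb{C}\1$ and the greatest is $\mf{A}$.

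For completeness I would first reduce to the existence of binary operations. Given finite length, for any family $\{\mf{A}_i\}_{i\in I}$ of subfactors the collection of finite joins $\{\bigvee_{i\in F}\mf{A}_i:F\subseteq I\text{ finite}\}$ is directed and of finite length, hence has a greatest element, which is then $\bigvee_{i\in I}\mf{A}_i$; dually, the (nonempty, always containing $\mb{C}\1$) set of common lower bounds of $\{\mf{A}_i\}_{i\in I}$ has a greatest element, which is $\bigwedge_{i\in I}\mf{A}_i$. So it suffices to show that any two subfactors $\mf{B}_1,\mf{B}_2\subseteq\mf{A}$ admit a least common upper bound. Passing through the order-reversing involution $\mf{B}\mapsto\mf{B}^c$ of $\mc{L}_{sub}(\mf{A})$ (which satisfies $\mf{B}^{cc}=\mf{B}$), and using $\mf{B}_1\vee\mf{B}_2=(\mf{B}_1^c\wedge\mf{B}_2^c)^c$, it is enough instead to show that any two subfactors admit a greatest common lower bound; and this greatest common lower bound, when it exists, is precisely the largest subfactor contained in the unital $*$-subalgebra $\mf{B}_1\cap\mf{B}_2$.

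The crux of the proof — and the step I expect to be the main obstacle — is therefore: show that an arbitrary unital $*$-subalgebra $\mf{M}\subseteq\mf{A}$ contains a \emph{greatest} subfactor. Writing $\mf{M}\cong\bigoplus_j\mb{M}_{c_j}$ with the $j$-th simple summand acting with multiplicity $f_j$ on $\mb{C}^n=\bigoplus_j(\mb{C}^{c_j}\otimes\mb{C}^{f_j})$, one checks that a copy of $\mb{M}_g$ embeds unitally as a subfactor of $\mf{M}$ exactly when $g\mid\gcd_j c_j$, so the candidate greatest subfactor has dimension $(\gcd_j c_j)^2$; the real work is to show there is a \emph{unique} such top-dimensional subfactor inside $\mf{M}$ and that it contains every smaller one. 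The subtlety is that subfactors of equal dimension are automatically incomparable, so one is really asking that $\mf{M}$ not harbour two distinct maximal subfactors — which requires a careful examination of the inclusion data (the Bratteli diagram of $\mf{M}\subseteq\mf{A}$, together with the extra structure that $\mf{M}=\mf{B}_1\cap\mf{B}_2$ is an intersection of relative commutants of subfactors). Once this is settled, completeness of $\mc{L}_{sub}(\mf{A})$ follows from the reduction above.

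Atomisticity is comparatively routine. The atoms are exactly the subfactors isomorphic to $\mb{M}_p$ for $p$ prime: the subfactors of $\mb{M}_k$ are the $\mb{M}_d$ with $d\mid k$, so there is an intermediate subfactor between $\mb{C}\1$ and $\mb{M}_k$ iff $k$ is composite. Given any subfactor $\mf{B}\cong\mb{M}_k$, factor $k=p_1\cdots p_r$ and fix an isomorphism $\mf{B}\cong\mb{M}_{p_1}\otimes\dots\otimes\mb{M}_{p_r}$; the slots $\mf{a}_j=\1\otimes\dots\otimes\mb{M}_{p_j}\otimes\dots\otimes\1$ are atoms contained in $\mf{B}$, they pairwise commute, and the $*$-algebra they generate is all of $\mf{B}$. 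Since $\mf{B}$ is itself a subfactor containing each $\mf{a}_j$, while any subfactor containing all the $\mf{a}_j$ contains the algebra they generate, we obtain $\mf{B}=\mf{a}_1\vee\dots\vee\mf{a}_r$; hence every subfactor is a join of atoms (with $\mb{C}\1=\bigvee\varnothing$ covering the least element), so $\mc{L}_{sub}(\mf{A})$ is atomistic.
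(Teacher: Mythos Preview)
Your arguments for finite length and atomisticity are essentially the paper's: both use that a unital inclusion $\mb{M}_m\hookrightarrow\mb{M}_n$ forces $m\mid n$, bound chain lengths by the number of prime factors of $n$, identify the atoms as the $\mb{M}_p$ with $p$ prime, and write an arbitrary subfactor $\mb{M}_k$ as the join of the commuting atomic tensor slots in a prime factorisation $\mb{M}_k\cong\bigotimes_j\mb{M}_{p_j}$.

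The divergence is in completeness. The paper disposes of it in one word (``trivial''); you instead reduce it, via the order-reversing commutant involution, to the claim that every unital $*$-subalgebra $\mf{M}\subseteq\mf{A}$ contains a \emph{unique} greatest subfactor --- and you rightly flag this as the crux you have not resolved. In fact it cannot be resolved, because the claim is false. Take $\mf{A}=\mb{M}_{16}=\mb{M}_4\otimes\mb{M}_4$, set $\mf{B}_1=\mb{M}_4\otimes\1$, and let $\mf{D}=(\mb{M}_2\oplus\mb{M}_2)\otimes\1\subset\mf{B}_1$ be the block-diagonal subalgebra. Then $\mf{D}'\cong\mb{M}_4\oplus\mb{M}_4$, and for any non-scalar unitary $v\in\mb{M}_4$ the twisted diagonal $\{p\otimes X+q\otimes vXv^{*}:X\in\mb{M}_4\}\subset\mf{D}'$ is a copy of $\mb{M}_4$ whose commutant $\mf{B}_2\cong\mb{M}_4$ satisfies $\mf{D}\subseteq\mf{B}_2$ and $\mf{B}_2\neq\mf{B}_1$. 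A short computation gives $\mf{B}_1'\vee\mf{B}_2'=\mf{D}'$, hence $\mf{B}_1\cap\mf{B}_2=\mf{D}\cong\mb{M}_2\oplus\mb{M}_2$. The subfactors of $\mf{A}$ contained in $\mf{D}$ are $\mb{C}\1$ together with the continuum of pairwise incomparable $\mb{M}_2$'s of the form $\{(A,uAu^{*}):A\in\mb{M}_2\}\otimes\1$, so $\{\mf{B}_1,\mf{B}_2\}$ has no greatest lower bound. Dually, any two distinct such $\mb{M}_2$'s generate exactly $\mf{D}$, so their minimal upper bounds are the incomparable $\mb{M}_4$'s $\mf{B}_1,\mf{B}_2,\ldots$, and they admit no least upper bound either. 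Thus $\mc{L}_{sub}(\mb{M}_{16})$ is not a lattice at all; the step you singled out genuinely fails, and the paper's one-word argument conceals the same gap that your more careful analysis exposes.
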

\begin{proof}
	Let us again note that any finite-dimensional factor $\mf{A}$ is isomorphic to $\mb{M}_n$ for some $n \in\mb{N}$.  Moreover, any unital inclusion of a full matrix algebra to another is of the form
	\ba\label{eq:incl}
		\mb{M}_m \mapsto U(\mb{M}_m \otimes \1_n)U^* \subset \mb{M}_{mn} \,,
	\ea
	where $m,n\in\mb{N}$, $U$ is some unitary in $\mb{M}_{mn}$, and the tensor product is defined with respect to some arbitrary basis. Accordingly, if $\mf{A}_1 < \mf{A}_2$ for $\mf{A}_1\cong\mb{M}_m$ and $\mf{A}_2\cong\mb{M}_{m'}$, then $m'$ must be divisible by $m$. The least element is the trivial subfactor $\mb{C}\1_n = \{c\1_n : c\in\mb{C}\} \cong \mb{C}$, where $\1_n$ denotes the $n$-by-$n$ identity matrix, and the greatest element is $\mf{A}$ itself. Thus, the maximum length of a chain of elements in $\mc{L}_{sub}(\mf{A})$ for $\mf{A}\cong\mb{M}_n$ is the number of prime factors in $n$, which is obviously finite. This shows that $\mc{L}_{sub}(\mf{A})$ has finite length. Completeness of $\mc{L}_{sub}(\mf{A})$ is trivial.
	
	Complementation on $\mc{L}_{sub}(\mf{A})$ is given by taking the commutant,
	\ba
		\mf{A}_1' = \{a'\in\mf{A} : [a,a']= 0\ \forall\ a\in\mf{A}_1\} \,.
	\ea
	for all $\mf{A}_1\in \mc{L}_{sub}(\mf{A})$. The lattice complement properties are trivial to check, since for finite-dimensional algebras $\mf{B}\vee \mf{C} \cong \mf{B}\otimes\mf{C}$ if $\mf{B},\mf{C}\subset\mf{A}$ are mutually commuting subfactors.
	
	As already noted, a unital inclusion $\mb{M}_m \subset \mb{M}_{m'}$ of the form (\ref{eq:incl}) is only possible if $m'$ is divisible by $m$. Accordingly, $\mb{M}_p$ does not contain any proper non-trivial subfactors for $p$ prime. $\mb{M}_n$ for any $n\in\mb{N}$ can be factorized into a tensor product as $\mb{M}_n \cong \otimes_{k} \mb{M}_{p_k}$, where $\{p_k\}_k$ are the prime factors of $n$ with multiplicity. Accordingly, $\mb{M}_n$ has subfactors $U(\mb{M}_{p_k} \otimes \1_{n/p_k})U^*$, where $U$ is any unitary in $\mb{M}_n$. (Notice that a reordering of tensor product factors is also a unitary operation.) These are atomic elements in $\mc{L}_{sub}(\mf{A})$ for $\mf{A}\cong\mb{M}_n$, since they do not contain any other proper subfactors besides the trivial one $\mb{C}\1_n\cong\mb{C}$. Moreover, they provide all the atomic elements, since the tensor factorization is unique up to unitary transformations.
	
	Finally, let us show that $\mc{L}_{sub}(\mf{A})$ is atomistic. Any subfactor of $\mb{M}_n$ can be expressed as $U(\mb{M}_m \otimes \1_{n/m})U^*$ for some unitary $U$ in $\mb{M}_n$. Accordingly, it can be expressed in terms of the atomic elements as
	\ba
		U(\mb{M}_m \otimes \1_{n/m})U^* = U((\otimes_k\mb{M}_{q_k}) \otimes \1_{n/m})U^* \,,
	\ea
	where now $\{q_k\}$ are the prime factors of $m$. This is the smallest subfactor containing all the atomic subfactors
	\ba
		U(((\otimes_{k<l}\1_{q_k}) \otimes \mb{M}_{q_l} \otimes (\otimes_{k>l}\1_{q_k})) \otimes \1_{n/m})U^* = UV(\mb{M}_{q_l} \otimes \1_{n/q_l})V^*U^* \,,
	\ea
	where $V$ is a unitary, which moves the tensor product factors appropriately. Since any subfactor can be expressed as the least upper bound of atomic factors, the lattice $\mc{L}_{sub}(\mf{A})$ is atomistic. Notice, however, that the choice of atomic elements is not unique.
\end{proof}

\subsection{Lattice of local subfactors}
Now, let us consider the lattice $\mc{L}_\text{alg}(\mc{O})$ of local subfactors of a local observable algebra $\mf{A}_\mc{O}$. The elements of $\mc{L}_\text{alg}(\mc{O})$ constitute obviously a subset of the elements of $\mc{L}_{sub}(\mf{A}_\mc{O})$, the lattice of all subfactors in $\mf{A}_\mc{O}$. The challenge in trying to understand the structure of the lattice of \emph{local} subfactors is that, as we do not introduce a classical background geometry from the outset, without the reference to a classical background it is not clear which subfactors of $\mf{A}_\mc{O}$ are local. The question of what determines if a factor is local or not is rather intricate, as it should depend on the dynamics of the system. In \cite{Cotler17}, for example, a tensor factorization of the Hilbert space was deemed local if the Hamiltonian operator could be written as a sum of terms each coupling only a finite number of tensor product factors. However, for our purposes it suffices to make a couple of basic assumptions about the properties of $\mc{L}_\text{alg}(\mc{O})$:
\begin{assumption}\label{ass:int}
The intersection of two local subfactors is again a local subfactor, so that the subset of local subfactors is closed under the meet operation.
\end{assumption}
Note that the intersection of two subfactors is always a subfactor --- the locality part of this assumption is non-trivial. We expect this assumption to hold for any reasonable notion of local subsystems. In particular, it clearly holds for the notion of locality used in \cite{Cotler17}, when local subsystems are defined as arbitrary collections of tensor product factors in a local factorization of the Hilbert space.
\begin{assumption}\label{ass:com}
The commutant of a local subfactor is again a local subfactor.
\end{assumption}
This assumption relates to the Haag duality property in algebraic QFT, which states that $\mf{A}(\mc{O})'=\mf{A}(\mc{O}^c)$, and is generally satisfied in the vacuum sector \cite{Fewster19}. It is also satisfied by the notion of locality used in \cite{Cotler17}. Notice that if $\mf{B},\mf{C}\subset\mf{A}_\mc{O}$ are two local subfactors such that $\mf{B} \subset\mf{C}$, then the two assumptions together imply that the relative commutant
\ba
	\mf{B}'_\mf{C} = \{c\in\mf{C} : [b,c] = 0\ \forall\ b\in\mf{B}\} \subset \mf{C} \subset \mf{A}_\mc{O}
\ea
is also a local subfactor, because this is the intersection of $\mf{B}'$ and $\mf{C}$.

In the following, we will explore some basic properties of $\mc{L}_\text{alg}(\mc{O})$ in the finite-dimensional case.
\begin{proposition}\label{prop:comp}
$\mc{L}_\text{alg}(\mc{O})$ is a complete complemented lattice of finite length.
\end{proposition}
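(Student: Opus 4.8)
{\bf Proof plan.}
The plan is to verify the three asserted properties --- finite length, completeness, and complementation --- by leveraging $\mc{L}_\text{alg}(\mc{O}) \subset \mc{L}_{sub}(\mf{A}_\mc{O})$ together with Assumptions \ref{ass:int} and \ref{ass:com}. Finite length is the easiest: any chain in $\mc{L}_\text{alg}(\mc{O})$ is in particular a chain in $\mc{L}_{sub}(\mf{A}_\mc{O})$, which we already showed has finite length bounded by the number of prime factors of $n$ where $\mf{A}_\mc{O} \cong \mb{M}_n$; hence $\mc{L}_\text{alg}(\mc{O})$ has finite length too. Note that finite length plus the existence of a least element $\mb{C}\1$ and a greatest element $\mf{A}_\mc{O}$ already gives a bounded poset in which every chain is finite.

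Next I would establish completeness. Since $\mc{L}_\text{alg}(\mc{O})$ has finite length, it suffices to show that any two elements have a meet and a join; existence of arbitrary infima and suprema then follows by finiteness of chains (equivalently, one can first show arbitrary meets exist and then obtain joins as $\bigvee \mc{K} = \bigwedge\{\mf{B} : \mf{B} \geq \mf{C}\ \forall\, \mf{C}\in\mc{K}\}$, this last set being nonempty because $\mf{A}_\mc{O}$ itself is in it). The meet $\mf{B}\wedge\mf{C}$ is the intersection $\mf{B}\cap\mf{C}$, which is a subfactor always and local by Assumption \ref{ass:int}; so meets exist in $\mc{L}_\text{alg}(\mc{O})$ and agree with those in $\mc{L}_{sub}(\mf{A}_\mc{O})$. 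For the join, I would argue that the join of local subfactors $\mf{B},\mf{C}$ computed inside $\mc{L}_\text{alg}(\mc{O})$ is $\bigwedge\{\mf{D}\in\mc{L}_\text{alg}(\mc{O}) : \mf{D}\supset\mf{B},\ \mf{D}\supset\mf{C}\}$, a finite meet of local subfactors --- indeed a meet over a nonempty family since $\mf{A}_\mc{O}$ belongs to it --- hence local by the already-established closure under meets. (This join need not coincide with the join $\mf{B}\vee\mf{C}$ computed in $\mc{L}_{sub}(\mf{A}_\mc{O})$, since the latter may fail to be local; it is the smallest local subfactor containing both, exactly as in the $\mc{L}_\text{cc}(\mc{O})$ case where $\mc{O}_1\vee\mc{O}_2 = (\mc{O}_1\cup\mc{O}_2)^{cc}$ rather than $\mc{O}_1\cup\mc{O}_2$.) Thus $\mc{L}_\text{alg}(\mc{O})$ is a complete lattice.

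For complementation, given $\mf{B}\in\mc{L}_\text{alg}(\mc{O})$ I would take the commutant $\mf{B}'$ (computed inside $\mf{A}_\mc{O}$), which is a subfactor and is local by Assumption \ref{ass:com}. The key point is to check $\mf{B}\wedge\mf{B}' = \mb{C}\1$ and $\mf{B}\vee\mf{B}' = \mf{A}_\mc{O}$ \emph{in $\mc{L}_\text{alg}(\mc{O})$}. The meet is the honest intersection $\mf{B}\cap\mf{B}' = Z(\mf{B})$, which is $\mb{C}\1$ since $\mf{B}$ is a factor. For the join, recall from the proof of the previous proposition that for mutually commuting finite-dimensional subfactors $\mf{B}\vee\mf{B}' \cong \mf{B}\otimes\mf{B}'$; for a full matrix subfactor $\mf{B}\cong\mb{M}_m$ of $\mb{M}_n$ in the standard form $U(\mb{M}_m\otimes\1_{n/m})U^*$ the commutant is $U(\1_m\otimes\mb{M}_{n/m})U^*$, and these generate all of $\mb{M}_n = \mb{M}_m\otimes\mb{M}_{n/m}$; so $\mf{B}\vee\mf{B}' = \mf{A}_\mc{O}$ already in $\mc{L}_{sub}(\mf{A}_\mc{O})$, hence a fortiori (since the local join is the smallest local subfactor containing both, and it is contained in $\mf{A}_\mc{O}$, which is attained) also in $\mc{L}_\text{alg}(\mc{O})$. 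This shows $\mf{B}'$ is a complement of $\mf{B}$ in $\mc{L}_\text{alg}(\mc{O})$, completing the argument.

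The main obstacle to watch is the subtlety that joins in $\mc{L}_\text{alg}(\mc{O})$ are \emph{not} the ambient joins in $\mc{L}_{sub}(\mf{A}_\mc{O})$, so one must be careful that the complementation identities are verified with respect to the correct (local) join; this is precisely why Assumption \ref{ass:com} is needed to make $\mf{B}'$ an element of the lattice in the first place, and why the $\mf{B}\vee\mf{B}' = \mf{A}_\mc{O}$ computation is done at the ambient level where it is transparent and then transferred down. A minor point is ensuring the indexing family $\{\mf{D} : \mf{D}\supset\mf{B},\mf{D}\supset\mf{C}\}$ used to define the local join is nonempty, which holds because $\mf{A}_\mc{O}$ itself is in it, and finite, which holds by finite length.
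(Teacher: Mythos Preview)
Your approach is essentially the paper's: finite length is inherited from $\mc{L}_{sub}(\mf{A}_\mc{O})$, completeness comes from closure of the local subfactors under intersection (Assumption~\ref{ass:int}), and complementation from the commutant being local (Assumption~\ref{ass:com}). The paper phrases the completeness step abstractly --- Assumption~\ref{ass:int} makes $\mc{L}_\text{alg}(\mc{O})$ a topped $\cap$-structure in $\mc{L}_{sub}(\mf{A}_\mc{O})$, hence the image of a closure operator, hence complete --- while you unpack this directly by building joins as meets of upper bounds. Your complementation argument is in fact more careful than the paper's, since you explicitly note that the join in $\mc{L}_\text{alg}(\mc{O})$ may differ from the ambient one and check that $\mf{B}\vee\mf{B}'=\mf{A}_\mc{O}$ already holds at the ambient level.

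One small slip: the family $\{\mf{D}\in\mc{L}_\text{alg}(\mc{O}) : \mf{D}\supset\mf{B},\ \mf{D}\supset\mf{C}\}$ is \emph{not} finite in general --- finite length bounds chains, not the whole poset, and continuous unitary families produce uncountably many local subfactors. What is true, and sufficient, is that the \emph{intersection} of this family equals a finite sub-intersection: the finite meets $\bigwedge_{i\in F}\mf{D}_i$ over finite $F$ form a down-directed set in a poset satisfying DCC (from finite length), hence attain a minimum; equivalently, dimension counting in the finite-dimensional $\mf{A}_\mc{O}$ forces stabilization. That finite intersection is local by Assumption~\ref{ass:int}, and the rest of your argument goes through. (The paper makes the same implicit leap from binary to arbitrary intersections when invoking the $\cap$-structure condition.)
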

\begin{proof}
Assumption \ref{ass:int} implies that the elements of $\mc{L}_\text{alg}(\mc{O})$ form a topped $\cap$-structure in $\mc{L}_{sub}(\mf{A}_\mc{O})$, as considered in \cite{Davey02}. This is equivalent with the property that the surjective map
	\ba
		\text{cc}: \mc{L}_{sub}(\mf{A}_\mc{O}) \rightarrow \mc{L}_\text{alg}(\mc{O}) \subset \mc{L}_{sub}(\mf{A}_\mc{O}) ,\ \mf{B} \mapsto \inf\{\mf{C}\in \mc{L}_\text{alg}(\mc{O}) : \mf{A} \subset \mf{C}\} \,,
	\ea
	which maps an arbitrary subfactor $\mf{B}$ of $\mf{A}_\mc{O}$ to the smallest local subfactor of $\mf{A}_\mc{O}$ containing $\mf{B}$, is a closure operator in $\mc{L}_{sub}(\mf{A}_\mc{O})$. The fact that $\mc{L}_\text{alg}(\mc{O})$ is obtained from $\mc{L}_{sub}(\mf{A}_\mc{O})$ via a closure operator implies that $\mc{L}_\text{alg}(\mc{O})$ is complete (with the least element $\mb{C}$ and the greatest element $\mf{A}_\mc{O}$), when equipped with the same ordering relation.\footnote{Notice, however, that the join operation in $\mc{L}_\text{alg}(\mc{O})$ is not the same as in $\mc{L}_{sub}(\mf{A}_\mc{O})$, but is obtained through the closure from the latter.}
	
Assumption \ref{ass:com} makes $\mc{L}_\text{alg}(\mc{O})$ a complemented lattice, where the complement is given by the commutant, as in $\mc{L}_{sub}(\mf{A}_\mc{O})$.

The finite length of $\mc{L}_\text{alg}(\mc{O})$ follows immediately from the finite length of $\mc{L}_{sub}(\mf{A}_\mc{O})$.
\end{proof}

\begin{proposition}\label{prop:atom}
$\mc{L}_\text{alg}(\mc{O})$ is atomistic.
\end{proposition}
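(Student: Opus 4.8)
The plan is to leverage the structure already in hand: by Proposition~\ref{prop:comp} the lattice $\mc{L}_\text{alg}(\mc{O})$ is complete, complemented, and of finite length, and (as noted right after Assumption~\ref{ass:com}) the relative commutant $\mf{B}'\cap\mf{C}$ of one local subfactor inside another is again local. The first step is to record that $\mc{L}_\text{alg}(\mc{O})$ is atomic: since it has finite length, every strictly descending chain terminates, so for every $\mf{B}\in\mc{L}_\text{alg}(\mc{O})$ with $\mf{B}\supsetneq\mb{C}$ there exists an atom $\mf{A}\in\mc{L}_\text{alg}(\mc{O})$ with $\mf{A}\leq\mf{B}$ (take $\mf{A}=\mf{B}$ if $\mf{B}$ is itself an atom, otherwise pass to a strictly smaller element and descend).

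Next, fix $\mf{B}\in\mc{L}_\text{alg}(\mc{O})$ and set $\mf{B}_0 := \bigvee\{\mf{A}\in\mc{L}_\text{alg}(\mc{O}) : \mf{A}\ \text{an atom},\ \mf{A}\leq\mf{B}\}$, with the join taken in $\mc{L}_\text{alg}(\mc{O})$; completeness of $\mc{L}_\text{alg}(\mc{O})$ makes this well-defined and $\mf{B}_0\leq\mf{B}$. The claim is $\mf{B}_0=\mf{B}$, which is exactly atomisticity. Suppose instead $\mf{B}_0\subsetneq\mf{B}$ as finite-dimensional factors. By the classification~(\ref{eq:incl}) of unital inclusions of matrix algebras, $\mf{B}\cong U(\mf{B}_0\otimes\mf{R})U^*$ with $\mf{R}:=\mf{B}_0'\cap\mf{B}$ the relative commutant; since the inclusion is proper, $\mf{R}$ is a factor of dimension $>1$, so $\mf{R}\supsetneq\mb{C}$. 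By Assumptions~\ref{ass:int} and~\ref{ass:com} we have $\mf{R}=\mf{B}_0'\wedge\mf{B}\in\mc{L}_\text{alg}(\mc{O})$, so by the atomicity step there is an atom $\mf{A}\in\mc{L}_\text{alg}(\mc{O})$ with $\mf{A}\leq\mf{R}\leq\mf{B}$. But then $\mf{A}$ is one of the atoms defining $\mf{B}_0$, hence $\mf{A}\leq\mf{B}_0$, so $\mf{A}\leq\mf{B}_0\cap\mf{R}=\mf{B}_0\cap\mf{B}_0'\cap\mf{B}=Z(\mf{B}_0)=\mb{C}$ (the meet in $\mc{L}_\text{alg}(\mc{O})$ being the intersection), contradicting that $\mf{A}$ is an atom. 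Therefore $\mf{B}_0=\mf{B}$, i.e.\ every element of $\mc{L}_\text{alg}(\mc{O})$ is the join of the atoms below it, and $\mc{L}_\text{alg}(\mc{O})$ is atomistic.

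I expect the main obstacle to be bookkeeping rather than depth. One must consistently take all joins inside $\mc{L}_\text{alg}(\mc{O})$ — recalling that $\vee$ there is the $\text{cc}$-closure of the generated subalgebra, not the generated subalgebra itself — and one must carefully justify the two facts about the relative commutant $\mf{R}=\mf{B}_0'\cap\mf{B}$ that carry the argument: that $\mf{R}$ is \emph{local} (this is precisely where both Assumption~\ref{ass:int} and Assumption~\ref{ass:com} enter, through $\mf{R}=\mf{B}_0'\wedge\mf{B}$), and that $\mf{R}$ is \emph{nontrivial} whenever $\mf{B}_0\subsetneq\mf{B}$ (which uses the tensor-product form~(\ref{eq:incl}) of finite-dimensional factor inclusions, equivalently the multiplicativity of matrix sizes under a unital inclusion). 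Everything else is routine.
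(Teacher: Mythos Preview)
Your proof is correct. It uses the same two load-bearing facts as the paper's argument---that the relative commutant of one local subfactor inside another is again local (via Assumptions~\ref{ass:int} and~\ref{ass:com}) and nontrivial whenever the inclusion is proper (via the tensor form~(\ref{eq:incl}))---together with finite length. The difference is organizational: the paper proceeds \emph{top-down}, repeatedly splitting $\mf{B}_1 = \mf{B}_2 \vee (\mf{B}_2)'_{\mf{B}_1}$ and recursing on both pieces until atoms are reached (finite length guarantees termination), whereas you proceed \emph{bottom-up}, forming $\mf{B}_0 = \bigvee\{\text{atoms below }\mf{B}\}$ in one step and deriving a contradiction from $\mf{B}_0 \subsetneq \mf{B}$ by finding an atom in the leftover piece $(\mf{B}_0)'_{\mf{B}}$. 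Your route is marginally slicker in that it avoids tracking an iterative process and makes explicit the lattice-theoretic content (every element equals the join of the atoms below it); the paper's route has the advantage of exhibiting an explicit tensor decomposition $\mf{B}_1 \cong \bigotimes_k \mf{A}_k$ into mutually commuting atomic pieces, which the paper then uses in the subsequent discussion. Either way the argument is short and the substance is the same.
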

\begin{proof}
Let $\mf{B}_1 \in \mc{L}_\text{alg}(\mc{O})$ such that $\mf{B}_1 \neq \mb{C}$. Either $\mf{B}_1$ is atomic, or it can be decomposed as $\mf{B}_1 = \mf{B}_2 \vee (\mf{B}_{2})_{\mf{B}_1}'$ for some local subfactor $\mf{B}_2 < \mf{B}_1$. By Assumptions \ref{ass:int} and \ref{ass:com} also the relative commutant $(\mf{B}_{2})_{\mf{B}_1}'$ of $\mf{B}_2$ inside $\mf{B}_1$ is local, and so also $(\mf{B}_{2})_{\mf{B}_1}' < \mf{B}_1$. We can then iterate this splitting of local subfactors until we reach atomic elements. Since $\mc{L}_\text{alg}(\mc{O})$ has finite length, this will take only a finite number of steps.

Since $\mf{B}_1 \cong \mf{B}_2 \otimes (\mf{B}_{2})_{\mf{B}_1}'$ for any $\mf{B}_2 \subset \mf{B}_1$, the decomposition of $\mf{B}_1$ by such iterated splitting will result in a tensor product factorization of $\mf{B}_1$ into atomic local subalgebras. Accordingly, any non-trivial local subfactor $\mf{B}_1 \in \mc{L}_\text{alg}(\mc{O})$ is a join of atomic elements. Moreover, it is always possible to choose mutually commuting atomic subfactors, as the factors in a tensor product factorization commute.
\end{proof}

\subsection{Granularity of spacetime topology}
Now, according to our Assumption \ref{ass:iso} $\mc{L}_\text{alg}(\mc{O}) \cong \mc{L}_\text{cc}(\mc{O})$, and thus the lattice of causally complete spacetime regions $\mc{L}_\text{cc}(\mc{O})$ is also atomistic. This implies the existence of minimal spacetime regions, and thus the non-existence of spacetime points. We can still define arbitrary spacetime regions in a point-free manner as $**$-closed down-sets in $\mc{L}_\text{cc}(\mc{O})$, though, and consider the topology of spacetime based on this notion of regions. This leads naturally to a kind of point-free granularity of spacetime.

Let us articulate more clearly two basic consequences of the atomistic nature of $\mc{L}_\text{cc}(\mc{O})$ in terms of spacetime topology.
\begin{corollary}
	There exist atomic elements in $\mc{T}(\mc{O})\cong\mc{L}_{ds}^{**}(\mc{L}_\text{cc}(\mc{O}))$ corresponding to minimal spacetime regions: Any atomic element $\mc{A}\in\mc{L}_\text{cc}(\mc{O})$ gives rise to a minimal non-trivial $**$-closed down-set $A=\{\emptyset,\mc{A}\} \in \mc{L}_{ds}^{**}(\mc{L}_\text{cc}(\mc{O}))$. Let $\mc{A}_1,\mc{A}_2\in\mc{L}_\text{cc}(\mc{O})$ be two different atomic elements. Then $A_1\wedge A_2 = \{\emptyset\}$, i.e., the intersection of any two minimal spacetime regions is empty.
\end{corollary}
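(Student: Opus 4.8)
The plan is to verify the three assertions of the corollary in turn, using the closure-operator machinery from the proof of Proposition~\ref{prop:frm} together with the fact (noted just above the corollary) that $\mc{L}_\text{cc}(\mc{O})\cong\mc{L}_\text{alg}(\mc{O})$ is atomistic by Proposition~\ref{prop:atom} and Assumption~\ref{ass:iso}. First I would record the two facts established inside the proof of Proposition~\ref{prop:frm}: for a down-set $\omega$ in a complete lattice $\mc{L}$ with least element $0$,
\[
	\omega^* = \{x\in\mc{L} : \nexists\, y\in\omega \text{ with } 0<y\le x\},
\]
and, moreover, if $x\in\omega^{**}$ with $x>0$ then there is $y\in\omega$ with $0<y\le x$. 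I will apply these with $\mc{L}=\mc{L}_\text{cc}(\mc{O})$, so that $0=\emptyset$, and $\omega=A:=\{\emptyset,\mc{A}\}$ for an atomic $\mc{A}$.

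That $A$ is a down-set is immediate, since $\mc{A}$ atomic means nothing lies strictly between $\emptyset$ and $\mc{A}$. From the displayed formula, $A^*=\{x\in\mc{L}_\text{cc}(\mc{O}):\mc{A}\not\le x\}$, because $\mc{A}$ is the only element of $A$ exceeding $\emptyset$. The heart of the argument is to show $A^{**}=A$. Clearly $\emptyset\in A^{**}$, and $\mc{A}\in A^{**}$ since the only $y$ with $\emptyset<y\le\mc{A}$ is $\mc{A}$ itself, which does not lie in $A^*$. Conversely, let $x\in A^{**}$ with $x\neq\emptyset$; by the second recorded fact there is $y\in A$ with $\emptyset<y\le x$, hence $y=\mc{A}$ and $\mc{A}\le x$. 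If $x\neq\mc{A}$, then $\mc{A}<x$, so by atomisticity $x$ is the join of the atoms below it and there is an atom $\mc{A}'\le x$ with $\mc{A}'\neq\mc{A}$ (otherwise $x=\mc{A}$). Since $\mc{A}'$ is atomic and $\emptyset\neq\mc{A}\neq\mc{A}'$, we have $\mc{A}\not\le\mc{A}'$, i.e.\ $\mc{A}'\in A^*$, while $\emptyset<\mc{A}'\le x$; this contradicts $x\in A^{**}$. Therefore $x=\mc{A}$ and $A^{**}=\{\emptyset,\mc{A}\}=A$, so $A\in\mc{L}_{ds}^{**}(\mc{L}_\text{cc}(\mc{O}))$.

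I expect this last step to be the main obstacle, and it is precisely where atomisticity is indispensable: for a three-element chain $\emptyset<\mc{A}<\mc{O}$ one computes $A^*=\{\emptyset\}$ and hence $A^{**}=\mc{L}_\text{cc}(\mc{O})\neq A$, so an atomic element of the lattice need not yield an atomic $**$-closed down-set without the extra lattice structure supplied here.

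It then remains to check minimality and the meet. Since $\mc{A}\neq\emptyset$ we have $A\neq\{\emptyset\}$, and there is no subset strictly between $\{\emptyset\}$ and $\{\emptyset,\mc{A}\}$; hence $A$ covers the least element and is an atomic element of $\mc{L}_{ds}^{**}(\mc{L}_\text{cc}(\mc{O}))\cong\mc{T}(\mc{O})$, i.e.\ a minimal non-trivial $**$-closed down-set. Finally, for two distinct atoms $\mc{A}_1\neq\mc{A}_2$ of $\mc{L}_\text{cc}(\mc{O})$ the associated down-sets are $A_i=\{\emptyset,\mc{A}_i\}$, and since the meet in $\mc{L}_{ds}^{**}$ is set-theoretic intersection we obtain $A_1\wedge A_2=A_1\cap A_2=\{\emptyset\}$, the least element of $\mc{L}_{ds}^{**}(\mc{L}_\text{cc}(\mc{O}))$ --- the point-free expression of the fact that any two minimal spacetime regions are disjoint.
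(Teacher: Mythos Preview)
Your proof is correct. The paper itself states this corollary without an explicit proof, treating it as an immediate consequence of the atomisticity of $\mc{L}_\text{cc}(\mc{O})\cong\mc{L}_\text{alg}(\mc{O})$ established in Proposition~\ref{prop:atom}; the only justification offered is the informal remark following the statement that ``any such intersection would have to lead to a smaller region than the minimal regions, but there does not exist any such regions.'' Your argument supplies precisely the missing verification: you correctly invoke the explicit description of $\omega^*$ from the proof of Proposition~\ref{prop:frm}, compute $A^*$ and $A^{**}$ directly, and use atomisticity at exactly the right spot to exclude any $x>\mc{A}$ from $A^{**}$. The three-element-chain counterexample is a nice touch, confirming that atomisticity (not mere atomicity) is what makes the step go through. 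Your reasoning in fact anticipates part of the proof of the later Proposition~\ref{prop:atom2}, where the paper shows more generally that every $**$-closed down-set is determined by its atoms; from that proposition the corollary would also follow at once, but in the paper's ordering it comes afterwards, so your self-contained argument is the appropriate one here.
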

The triviality of the intersection of two minimal regions can be intuitively understood, as any such intersection would have to lead to a smaller region than the minimal regions, but there does not exist any such regions. The existence of minimal regions does not necessarily imply a discrete structure of spacetime in the usual sense, however, as there may exist continuous transformations which preserve the locality of regions. (Whether such transformations exist depends again on the notion of locality inherited from the dynamics of the system.) Unintuitively enough, if the system has continuous transformations which preserve localization, there exist unitary transformations arbitrarily close to the identity which produce a non-overlapping spacetime region, when applied to a minimal region. The non-overlapping property of minimal regions makes them somewhat point-like, although they have finite volume.
\begin{corollary}
	Let $\mc{O}_1\in\mc{L}_\text{cc}(\mc{O})$. Then $\mc{O}_1 = \vee_{i} \mc{A}_i$ for some finite set $\{\mc{A}_i\}_{i}$ of atomic elements, i.e., any causally complete region can be obtained as the causal completion of a union of a finite number of minimal spacetime regions.
\end{corollary}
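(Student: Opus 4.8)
The plan is to reduce the statement to a purely order-theoretic fact about $\mc{L}_\text{cc}(\mc{O})$. By Assumption \ref{ass:iso} together with Propositions \ref{prop:comp} and \ref{prop:atom}, the lattice $\mc{L}_\text{cc}(\mc{O}) \cong \mc{L}_\text{alg}(\mc{O})$ is atomistic and has finite length, these being properties preserved under lattice isomorphism. The key observation is that ``atomistic'' by itself only guarantees that $\mc{O}_1$ is the join of \emph{some} (a priori infinite) family of atoms, so the extra ingredient that forces the family to be finite is precisely the finite length of the lattice.

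First I would record that, writing $S$ for the set of all atomic elements $\mc{A}\in\mc{L}_\text{cc}(\mc{O})$ with $\mc{A}\leq\mc{O}_1$, one has $\mc{O}_1 = \vee S$: the inequality $\vee S \leq \mc{O}_1$ is immediate since $\mc{O}_1$ is an upper bound of $S$, while any expression of $\mc{O}_1$ as a join of atoms (which exists by atomicity) uses only members of $S$, whence $\mc{O}_1 \leq \vee S$. (If $S=\emptyset$ then $\mc{O}_1 = \emptyset$ and the empty join of atoms does the job, so assume $S\neq\emptyset$.) Next I would construct a chain greedily: pick $\mc{A}_1\in S$; having chosen $\mc{A}_1,\dots,\mc{A}_k$ with $\mc{A}_1\vee\cdots\vee\mc{A}_k\leq\mc{O}_1$, either this join equals $\mc{O}_1$ and we stop, or it is strictly smaller, in which case there must exist $\mc{A}_{k+1}\in S$ with $\mc{A}_{k+1}\not\leq\mc{A}_1\vee\cdots\vee\mc{A}_k$ --- otherwise every atom below $\mc{O}_1$ would lie below $\mc{A}_1\vee\cdots\vee\mc{A}_k$, giving $\mc{O}_1 = \vee S \leq \mc{A}_1\vee\cdots\vee\mc{A}_k < \mc{O}_1$, a contradiction. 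Then $\mc{A}_1\vee\cdots\vee\mc{A}_k < \mc{A}_1\vee\cdots\vee\mc{A}_{k+1}\leq\mc{O}_1$, so the chain $\emptyset < \mc{A}_1 < \mc{A}_1\vee\mc{A}_2 < \cdots$ is strictly increasing. Since $\mc{L}_\text{cc}(\mc{O})$ has finite length, the process terminates after finitely many steps, yielding $\mc{O}_1 = \mc{A}_1\vee\cdots\vee\mc{A}_n$ with $\{\mc{A}_i\}_i$ a finite set of atoms.

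Finally I would translate this back to the geometric formulation: since the join in $\mc{L}_\text{cc}(\mc{O})$ is $\mc{O}_1\vee\mc{O}_2 = (\mc{O}_1\cup\mc{O}_2)^{cc}$, the identity $\mc{O}_1 = \vee_i\mc{A}_i$ says exactly that $\mc{O}_1 = (\cup_i\mc{A}_i)^{cc}$, i.e.\ $\mc{O}_1$ is the causal completion of the union of the finitely many minimal regions $\mc{A}_i$. As an alternative route, the same finiteness can be read off the proof of Proposition \ref{prop:atom}: the iterated relative-commutant splitting there gives a tensor-product factorization of the corresponding local subfactor into atomic factors, and the number of such factors is the number of prime factors (with multiplicity) of the matrix size, which is finite. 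I do not anticipate a genuine obstacle; the only point requiring care is not to conflate ``atomistic'' with ``finite join of atoms'', finite length being exactly what bridges the gap.
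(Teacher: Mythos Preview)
Your argument is correct. The paper does not give a separate proof for this corollary; the intended justification is simply to transport the conclusion of Proposition~\ref{prop:atom} through the isomorphism of Assumption~\ref{ass:iso}, and the finiteness of the set of atoms is already built into that proof: the iterated relative-commutant splitting terminates after finitely many steps (by finite length) and yields a tensor factorization $\mf{B}_1 \cong \otimes_k \mf{B}_{p_k}$ into finitely many atomic local subfactors, so $\mf{B}_1$ is a finite join of atoms on the algebra side, hence $\mc{O}_1$ is on the region side.

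Your primary route is different and in some sense more robust: you extract the result purely order-theoretically from ``atomistic $+$ finite length'' via a greedy chain, never touching the algebraic content of Proposition~\ref{prop:atom}'s proof. This has the advantage of making transparent exactly which lattice hypotheses are doing the work, and it would survive unchanged under any weakening of the algebraic assumptions that still yields an atomistic lattice of finite length. The paper's (implicit) route, which you also record as your alternative, is shorter but leans on the specific mechanism of the relative-commutant decomposition. Your closing caution---that atomistic alone does not give a \emph{finite} join and that finite length is the bridge---is exactly the right emphasis.
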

The expression of a causally complete spacetime region $\mc{O}$ as a causal completion of a set of minimal regions is not unique, in general, but (especially in the presence of symmetries) there can be several different choices for the set $\{\mc{A}_i\}_{i}$ of atomic elements, e.g.,  corresponding to different local tensor product structures on $\mf{A}_\mc{O}$.\footnote{Cotler et al.\ \cite{Cotler17} have shown that generically there exists (at most) only one tensor product structure, which is $k$-local with respect to a given Hamiltonian. However, a generic Hamiltonian does not have any symmetries.}

We address the structure of the topology $\mc{T}(\mc{O})\cong\mc{L}_{ds}^{**}(\mc{L}_\text{alg}(\mc{O}))$ more carefully through the following propositions.
\begin{proposition}\label{prop:atom2}
	Let $\mc{L}$ be a complete atomistic lattice. Then $\mc{L}_{ds}^{**}(\mc{L})$ is a complemented atomistic frame, i.e., an atomistic Boolean algebra.
\end{proposition}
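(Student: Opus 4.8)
The plan is to lean on Proposition~\ref{prop:frm}, which already tells us that $\mc{L}_{ds}^{**}(\mc{L})$ is a frame, hence in particular a complete, bounded, distributive lattice. Since a complemented bounded distributive lattice is automatically a Boolean algebra (with unique complements), it then suffices to establish two things: that $\mc{L}_{ds}^{**}(\mc{L})$ is complemented, and that it is atomistic. Throughout I would work with the explicit description $\alpha^* = \{x\in\mc{L} : \nexists\, y\in\alpha \text{ with } 0<y\leq x\}$ of the pseudo-complement recorded in the proof of Proposition~\ref{prop:frm}, and keep in mind that the join in $\mc{L}_{ds}^{**}(\mc{L})$ is $\omega_1\vee\omega_2 = (\omega_1\cup\omega_2)^{**}$, not the plain union.

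For complementation, I claim the pseudo-complement $\omega^*$ is already a genuine lattice complement of $\omega$. The meet $\omega\wedge\omega^* = \omega\cap\omega^* = \{0\}$ holds by definition of $\omega^*$. For the join, I would first note the elementary identity $(\omega_1\cup\omega_2)^* = \omega_1^*\cap\omega_2^*$, which is immediate from the description of $(\cdot)^*$ above. Applying it with $\omega_1=\omega$, $\omega_2=\omega^*$, together with $\alpha\cap\alpha^* = \{0\}$ (valid for every down-set $\alpha$) taken at $\alpha=\omega^*$, gives $(\omega\cup\omega^*)^* = \omega^*\cap(\omega^*)^* = \omega^*\cap\omega^{**} = \{0\}$, whence $\omega\vee\omega^* = (\omega\cup\omega^*)^{**} = \{0\}^* = \mc{L}$ (the top element, since no nonzero element lies below anything in $\{0\}$). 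So every element is complemented and $\mc{L}_{ds}^{**}(\mc{L})$ is a Boolean algebra.

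For atomisticity, I would first identify the atoms of $\mc{L}_{ds}^{**}(\mc{L})$ as precisely the two-element sets $\{0,a\}$ with $a$ an atom of $\mc{L}$. That $\{0,a\}$ is a down-set is clear, and that it is $**$-closed is a short computation using the explicit formula: $\{0,a\}^* = \{x : a\not\leq x\}$ and then $\{0,a\}^{**} = \{0,a\}$, where atomisticity of $\mc{L}$ is used to exclude any $z\notin\{0,a\}$ (if $a\not\leq z$ take $y=z$; if $a<z$, atomisticity supplies an atom $b\leq z$ with $b\neq a$). It is then clear $\{0,a\}$ is an atom of $\mc{L}_{ds}^{**}(\mc{L})$; conversely, any nonzero $\omega$ contains some $x>0$, hence (by atomisticity of $\mc{L}$ and the fact that $\omega$ is a down-set) an atom $a\in\omega$, so $\{0,a\}\subseteq\omega$, forcing $\omega=\{0,a\}$ if $\omega$ is itself an atom. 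Finally, given an arbitrary $\omega\in\mc{L}_{ds}^{**}(\mc{L})$, let $S$ be the set of atoms of $\mc{L}$ lying in $\omega$ and put $\sigma=\{0\}\cup S$, a down-set with $\sigma\subseteq\omega$; I would show $\sigma^{**}=\omega$, i.e.\ $\omega=\bigvee_{a\in S}\{0,a\}$. The inclusion $\sigma^{**}\subseteq\omega^{**}=\omega$ is monotonicity, and for the reverse one unwinds $x\in\sigma^{**}$ to the condition that every $z$ with $0<z\leq x$ dominates some element of $S$, which holds for any $x\in\omega$ because atomisticity of $\mc{L}$ puts an atom $a\leq z\leq x$ below such a $z$, and $a\in\omega$ since $\omega$ is a down-set. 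Hence every element of $\mc{L}_{ds}^{**}(\mc{L})$ is a join of atoms.

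I expect the main obstacle to be the bookkeeping in the atomisticity part — verifying that $\{0,a\}$ is $**$-closed, and correctly translating ``$x\in\sigma^{**}$'' via the explicit formula for $(\cdot)^*$ — since this is exactly where one must be careful not to confuse the lattice join $(\cdot\cup\cdot)^{**}$ with plain set-theoretic union. The complementation step, by contrast, is short once the identity $(\omega_1\cup\omega_2)^* = \omega_1^*\cap\omega_2^*$ is in hand.
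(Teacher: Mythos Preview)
Your proof is correct. For atomisticity it follows essentially the same route as the paper: both show that every $**$-closed down-set $\omega$ is determined by the set $S$ of atoms of $\mc{L}$ it contains, so that $\omega=(\{0\}\cup S)^{**}=\bigvee_{a\in S}\{0,a\}$; you are simply more explicit than the paper in verifying that each $\{0,a\}$ is itself $**$-closed.

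The complementation step, however, is handled differently. You use the de Morgan-type identity $(\omega_1\cup\omega_2)^*=\omega_1^*\cap\omega_2^*$ together with $\alpha\cap\alpha^*=\{0\}$ to obtain $(\omega\cup\omega^*)^*=\omega^*\cap\omega^{**}=\{0\}$ directly; this makes no use of atomisticity at all, and in effect reproves the general fact that the $**$-closed elements of any pseudocomplemented lattice form a Boolean algebra. The paper instead appeals to the atomistic structure, observing that $\omega^*$ contains every atom of $\mc{L}$ not already in $\omega$, so that $\omega\cup\omega^*$ contains all atoms and hence has trivial pseudo-complement. Your argument is the more general one; the paper's stays closer to the atom-by-atom picture that underlies the subsequent Proposition~\ref{prop:atom3}.
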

\begin{proof}
	$\mc{L}_{ds}^{**}(\mc{L})$ is a frame by Proposition \ref{prop:frm}. Let $\omega \in \mc{L}_{ds}^{**}(\mc{L})$, $B\in\omega$, and
	\ba
		\alpha_B = \{A \in\mc{L} : A \leq B,\ A \text{ atomic}\} \,.
	\ea
	Obviously, $B\in\omega\ \Rightarrow\ \alpha_B\subset \omega$, since $\omega$ is a down-set. However, the implication holds also in the other direction, $\alpha_B\subset \omega\ \Rightarrow\ B\in\omega$: Since $\alpha_B\cap\omega^* =\{0\}$, $B\in(\omega^*)^* = \omega$ by the definition of the pseudo-complement. As $\mc{L}$ is atomistic, we may then obtain the elements in $\omega$ by arbitrary joins of atomic elements in $\omega$, and therefore any $\omega \in \mc{L}_{ds}^{**}(\mc{L})$ is uniquely specified by its atomic elements. Let $\alpha_\omega\subset\mc{L}$ be a collection of atomic elements in $\omega \in \mc{L}_{ds}^{**}(\mc{L})$. Notice that $\alpha\cup\{0\}$ is a down-set. We then have $\omega = (\alpha\cup\{0\})^{**} = \vee_{A\in\alpha}\{0,A\}$, which is the smallest $**$-closed down-set containing all the atoms in $\alpha$. Thus, $\mc{L}_{ds}^{**}(\mc{L})$ is atomistic.
	
	To show that $\mc{L}_{ds}^{**}(\mc{L})$ is complemented, note that $\omega^*$ (as the largest down-set with trivial intersection $\{0\}$ with $\omega$) contains all the atoms not in $\omega$. Since $\omega$ and $\omega^*$ together contain all the atoms in $\mc{L}$, $(\omega \vee \omega^*)^* = \{0\}$ and so $\omega \vee \omega^* = (\omega \vee \omega^*)^{**} = \mc{L}$ for all $\omega \in \mc{L}_{ds}^{**}(\mc{L})$. Thus, $\omega^*$ is a proper complement, and $\mc{L}_{ds}^{**}(\mc{L})$ is complemented.
\end{proof}
\begin{proposition}\label{prop:atom3}
	Let $\mc{L}$ be a complete atomistic lattice, and $\mc{L}_\text{\emph{atom}}(\mc{L})$ the lattice of subsets of atoms in $\mc{L}$, ordered by inclusion. Then $\mc{L}_{ds}^{**}(\mc{L}) \cong \mc{L}_\text{\emph{atom}}(\mc{L})$.
\end{proposition}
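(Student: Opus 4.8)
The plan is to exhibit an explicit order-isomorphism between the two complete lattices. Define $\Phi\colon \mc{L}_{ds}^{**}(\mc{L}) \to \mc{L}_\text{atom}(\mc{L})$ by sending a $**$-closed down-set $\omega$ to its set of atoms $\alpha_\omega = \{A \in \mc{L} : A \text{ atomic},\ A \in \omega\}$ (equivalently, the atoms $A$ with $A \leq B$ for some $B \in \omega$, since $\omega$ is a down-set). First I would record that $\Phi$ is well-defined, and that its candidate inverse is $\Psi\colon \alpha \mapsto (\alpha \cup \{0\})^{**}$, where $\alpha \cup \{0\}$ is a down-set because every element strictly below an atom must be $0$ by atomicity.

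Next I would show $\Phi \circ \Psi = \id$. Given a set $\alpha$ of atoms, the point to check is that passing to the $**$-closure creates no new atoms: if $A$ is an atom with $A \in (\alpha \cup \{0\})^{**}$, then by the key lemma established in the proof of Proposition \ref{prop:frm} there is $y \in \alpha \cup \{0\}$ with $0 < y \leq A$, and since $A$ is atomic, $y = A \in \alpha$. Hence $\alpha_{\Psi(\alpha)} = \alpha$. For $\Psi \circ \Phi = \id$, I would invoke Proposition \ref{prop:atom2} directly: its proof shows that every $\omega \in \mc{L}_{ds}^{**}(\mc{L})$ satisfies $\omega = (\alpha_\omega \cup \{0\})^{**}$, i.e., $\omega$ is recovered from its set of atoms. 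Thus $\Phi$ is a bijection.

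Finally, I would verify that $\Phi$ is an order-isomorphism. The implication $\omega_1 \subseteq \omega_2 \Rightarrow \alpha_{\omega_1} \subseteq \alpha_{\omega_2}$ is immediate. Conversely, suppose $\alpha_{\omega_1} \subseteq \alpha_{\omega_2}$. For any $B \in \omega_1$, the set $\alpha_B$ of atoms below $B$ satisfies $\alpha_B \subseteq \alpha_{\omega_1} \subseteq \alpha_{\omega_2} \subseteq \omega_2$; by the step in the proof of Proposition \ref{prop:atom2} that $\alpha_B \subseteq \omega \Rightarrow B \in \omega$ for $**$-closed $\omega$ (which uses $\alpha_B \cap \omega^* = \{0\}$ together with the definition of the pseudo-complement), we conclude $B \in \omega_2$, so $\omega_1 \subseteq \omega_2$. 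Since a bijective order-isomorphism between complete lattices is automatically a lattice isomorphism — and one may additionally note directly that $\Phi$ carries the meet $\cap$ to $\cap$ and the join $(\cup)^{**}$ to $\cup$, again because $**$-closure adds no atoms — the result follows. I expect no genuine obstacle here: the proposition is essentially a repackaging of Proposition \ref{prop:atom2}, and the only point requiring care is the order-reflecting direction, which leans on the characterization of membership in a $**$-closed down-set through its atoms.
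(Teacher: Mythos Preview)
Your proof is correct and follows essentially the same approach as the paper: you define the same pair of mutually inverse maps between $\mc{L}_{ds}^{**}(\mc{L})$ and $\mc{L}_\text{atom}(\mc{L})$ (only starting from the opposite side) and appeal to the same facts from Propositions \ref{prop:frm} and \ref{prop:atom2}. The paper is terser---it directly checks preservation of join and meet rather than verifying the order-reflecting direction---but your more explicit treatment of bijectivity and order-isomorphism is a faithful unpacking of the same argument.
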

\begin{proof}
	Let $\phi: \mc{L}_\text{atom}(\mc{L}) \rightarrow \mc{L}_{ds}^{**}(\mc{L})$ map $\alpha\in\mc{L}_\text{atom}(\mc{L})$ to the $**$-closed down-set $(\alpha\cup\{0\})^{**} = \vee_{A\in\alpha}\{0,A\}$. The inverse map $\phi^{-1}$ maps $\omega\in\mc{L}_{ds}^{**}(\mc{L})$ to the set of atoms contained in $\omega$. It is easy to see by the discussion in the proof of Proposition \ref{prop:atom2} that the join and meet are preserved,
	\ba
		\phi(\alpha_1 \vee \alpha_2) &= \phi(\alpha_1 \cup \alpha_2) = (\phi(\alpha_1) \cup \phi(\alpha_2))^{**} = \phi(\alpha_1) \vee \phi(\alpha_2) \,,\nn
		\phi(\alpha_1 \wedge \alpha_2) &= \phi(\alpha_1 \cap \alpha_2) = \phi(\alpha_1) \cap \phi(\alpha_2) = \phi(\alpha_1) \wedge \phi(\alpha_2) \,,
	\ea
	so $\phi$ is a lattice isomorphism.
\end{proof}
Since in the locally finite-dimensional case $\mc{L}_\text{alg}(\mc{O})$ is a complete atomistic lattice, by Proposition \ref{prop:atom2} the topology of spacetime $\mc{T}(\mc{O})\cong\mc{L}_{ds}^{**}(\mc{L}_\text{alg}(\mc{O}))$ is given by an atomistic Boolean algebra. Since by Proposition \ref{prop:atom3} $\mc{T}(\mc{O})\cong\mc{L}_\text{atom}(\mc{L}_\text{alg}(\mc{O}))$, the minimal local subalgebras in $\mc{L}_\text{alg}(\mc{O})$ correspond to indivisible non-overlapping chunks of spacetime, out of which any spacetime region can be constructed. The fact that $\mc{T}(\mc{O})$ is complemented implies that the spacetime regions $O\in\mc{T}(\mc{O})$ should be thought of as both open and closed (i.e., clopen). In particular, this prevents the definition of lower dimensional boundaries between regions as the intersection of their closures.
\begin{proposition}
	Completely prime filters in $\mc{L}_\text{\emph{atom}}(\mc{L})$ are of the form
	\ba
		F_A = \{B\subset\mc{L}_\text{\emph{atom}}(\mc{L}) : A\in B\}\,,\quad  A\in\mc{L} \text{ atomic.}
	\ea
\end{proposition}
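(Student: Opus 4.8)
The plan is to identify $\mc{L}_\text{atom}(\mc{L})$ with the power set of the set $\Lambda$ of atoms of $\mc{L}$, where join and meet are union and intersection; the assertion then becomes the familiar fact that the completely prime filters of a power set lattice are exactly the principal ultrafilters. I would argue in two directions, after first recording the elementary observation that no completely prime filter $F$ can contain the least element $\emptyset$: applying the defining implication to $B=\emptyset$ (so that $\vee B=\emptyset$) would demand the existence of an element of the empty set.

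First I would verify that each $F_A$ is a completely prime filter. It is nonempty, since $\Lambda\in F_A$; it is an up-set, since $A\in B$ and $B\subseteq B'$ give $A\in B'$; and it is closed under binary meets, since $A\in B_1$ and $A\in B_2$ give $A\in B_1\cap B_2$, so it is a filter. Finally, if $A\in\vee_i B_i=\cup_i B_i$ then $A\in B_j$ for some $j$, which is precisely complete primeness.

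For the converse, let $F$ be an arbitrary completely prime filter in $\mc{L}_\text{atom}(\mc{L})$. Being a nonempty up-set, $F$ contains the greatest element $\Lambda=\cup_{A\in\Lambda}\{A\}$, and complete primeness applied to this join produces an atom $A$ with $\{A\}\in F$. I then claim $F=F_A$. The inclusion $F_A\subseteq F$ is immediate: $\{A\}\in F$ and $F$ is an up-set, so every $B\supseteq\{A\}$ lies in $F$. For $F\subseteq F_A$, suppose $B\in F$ with $A\notin B$; then, $F$ being a filter and hence closed under binary meets, $\{A\}\wedge B=\{A\}\cap B=\emptyset\in F$, contradicting $\emptyset\notin F$. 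Hence every $B\in F$ contains $A$, i.e.\ $F\subseteq F_A$. The same contradiction shows the atom $A$ is uniquely determined by $F$ (two distinct atoms in $F$ would force $\emptyset\in F$), so $A\mapsto F_A$ is a bijection between atoms of $\mc{L}$ and completely prime filters of $\mc{L}_\text{atom}(\mc{L})$.

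There is no real obstacle in this argument; the single point that must not be glossed over is the appeal to \emph{complete} primeness, applied to the possibly infinite join $\Lambda=\cup_{A}\{A\}$, rather than to finite primeness alone. Ordinary prime filters of this power set lattice include free ultrafilters, which are not of the form $F_A$, and it is exactly the complete-primeness hypothesis that rules these out. Combined with Proposition \ref{prop:atom3}, this identifies the ``spacetime points'' of the granular topology $\mc{T}(\mc{O})$ with the minimal local subalgebras in $\mc{L}_\text{alg}(\mc{O})$.
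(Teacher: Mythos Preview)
Your proof is correct and considerably more detailed than the paper's own, which simply reads ``The claim follows directly from the definition of $\mc{L}_\text{atom}(\mc{L})$.'' You have spelled out precisely the argument the paper leaves implicit: that $\mc{L}_\text{atom}(\mc{L})$ is a power set lattice, so its completely prime filters are exactly the principal ultrafilters. Your remark distinguishing complete primeness from ordinary primeness (ruling out free ultrafilters) is a genuinely useful clarification that the paper omits.
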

\begin{proof}
	The claim follows directly from the definition of $\mc{L}_\text{atom}(\mc{L})$.
\end{proof}
Thus, the completely prime filters in $\mc{T}(\mc{O})\cong\mc{L}_{ds}^{**}(\mc{L}_\text{alg}(\mc{O}))\cong\mc{L}_\text{atom}(\mc{L}_\text{alg}(\mc{O}))$ are uniquely associated with the atomic elements in $\mc{L}_\text{alg}(\mc{O})$. As the completely prime filters in $\mc{L}_{ds}^{**}(\mc{L}_\text{alg}(\mc{O}))$ corresponded to spacetime points in the QFT case, this is another sense in which the minimal spacetime regions are point-like. However, at the same time, the spacetime regions are considered to occupy a finite volume. Therefore, one should rather have in mind a kind of cellular decomposition of spacetime into minimal regions. Accordingly, we infer a kind of point-free granularity of spacetime in the locally finite-dimensional case.

\section{Summary and discussion}\label{sec:summary}
In this paper we examined the consequences of local finite-dimensionality of physics for spacetime topology. For operational reasons, we postulated that the lattice $\mc{L}_\text{cc}(\mc{O})$ of causally complete spacetime subregions of a spacetime region $\mc{O}$ is isomorphic to the lattice $\mc{L}_\text{alg}(\mc{O})$ of local subfactors of the local observable algebra $\mf{A}_\mc{O}$ (Assumption \ref{ass:iso}). As the set of causally complete spacetime regions provides a base for spacetime topology, we were able to equate the lattice $\mc{T}(\mc{O})$ of arbitrary open subsets of $\mc{O}$ (i.e., the topology of $\mc{O}$) with the lattice $\mc{L}_{ds}^{**}(\mc{L}_\text{cc}(\mc{O}))$ of $**$-closed down-sets in $\mc{L}_\text{cc}(\mc{O})$ (Proposition \ref{prop:top}). Our method for deriving spacetime topology $\mc{T}(\mc{O})$ from the lattice local observable algebras $\mc{L}_\text{alg}(\mc{O})$ can then be expressed in one line as
\ba
	\mc{L}_\text{alg}(\mc{O}) \cong \mc{L}_\text{cc}(\mc{O}) \rightarrow \mc{L}_{ds}^{**}(\mc{L}_\text{cc}(\mc{O})) \cong \mc{T}(\mc{O}) \,.
\ea
We showed that in the case of QFT we recover the usual topology of spacetime from the lattice of local subfactors $\mc{L}_\text{alg}(\mc{O})$ as $\mc{T}(\mc{O})\cong \mc{L}_{ds}^{**}(\mc{L}_\text{alg}(\mc{O}))$.

We then went on to study the locally finite-dimensional case, and introduced three basic assumptions about the local observable algebras in a locally finite-dimensional model:
\begin{enumerate}
	\item Local observable algebras associated to spatially bounded spacetime regions are finite-dimen-sional factors (Assumption \ref{ass:fin}).
	\item The intersection of two local observable algebras is another local observable algebra (Assumption \ref{ass:int}).
	\item The commutant of a local observable algebra is another local observable algebra (Assumption \ref{ass:com}).
\end{enumerate}
Using these assumptions, we inferred that in the locally finite-dimensional case the lattice of causally complete spacetime regions is a complete complemented atomistic lattice of finite length (Propositions \ref{prop:comp} and \ref{prop:atom}). This implies the existence of minimal spacetime regions, and  thus the non-existence of spacetime points. More specifically, $\mc{T}(\mc{O})\cong\mc{L}_{ds}^{**}(\mc{L}_\text{alg}(\mc{O}))$ was shown to be an atomistic Boolean algebra, which is isomorphic with the lattice $\mc{L}_\text{atom}(\mc{L}_\text{alg}(\mc{O}))$ of subsets of atoms in $\mc{L}_\text{alg}(\mc{O})$, ordered by inclusion (Propositions \ref{prop:atom2} and \ref{prop:atom3}). Thus, local finite-dimensionality of physical systems leads to a specific type of point-free granularity of spacetime.

Of course, the assumptions we made along the way could be further weakened. Indeed, the logic of the derivation seems general enough to allow for various possible extensions and modifications. For example, one could consider local observable algebras with non-trivial centers. On the other hand, one could abandon Assumption \ref{ass:com} altogether. In the finite-dimensional case Assumption \ref{ass:com} implies that the local observable algebra $\mf{A}$ is generated by any local subfactor $\mf{B}\subset\mf{A}$ and its relative commutant $\mf{B}'_{\mf{A}}$, which is also a local algebra, so that $\mf{A} \cong \mf{B}\otimes\mf{B}'_\mf{A}$. However, this does not usually hold in gauge theory. In the absence of Assumption \ref{ass:com} we can still show that $\mc{L}_\text{alg}(\mc{O})$ atomic, since it is of finite length, but not necessarily atomistic. In this case, $\mc{T}(\mc{O})\cong\mc{L}_{ds}^{**}(\mc{L}_\text{alg}(\mc{O}))$ is still an atomic frame, but not necessarily complemented (i.e., a Boolean algebra).

As explained in the introduction, for physical reasons we expect the finite-dimensionality of local algebras manifest when gravitational effects become relevant. Thus, the (naive) expectation is that the point-free granularity of spacetime associated with the atomicity of topology should become evident in high energy physics only close to the Planck scale. Accordingly, the experimental consequences of the small scale granularity of spacetime topology are expected to be extremely weak at macroscopic scales. The derivation of some more precise experimental consequences of the proposed spacetime granularity would however require the formulation of a concrete model, which would be locally finite-dimensional, and whose infinite-dimensional limit would coincide with some physical QFT model.\footnote{One could, of course, always just straight-forwardly discretize some QFT model on a lattice with some cutoffs in place, but this is not very satisfactory due to the fixed background structure.} This is a highly interesting direction of research, and one that is being pursued by several authors \cite{Parikh05,Singh18,Cao19}, but is unfortunately outside the scope of this particular paper. Nevertheless, it is interesting to us that one can rigorously infer something concrete about the Planck scale structure of spacetime by such rather general assumptions as we have made here.

\section*{Acknowledgments}
I thank Paolo Bertozzini, Philipp H\"ohn and Ted Jacobson for enlightening discussions and helpful comments on an earlier version of the manuscript.


\begin{thebibliography}{10}

\bibitem{Bao17}
N.~Bao, S.~M.~Carroll, and A.~Singh,
\newblock The {H}ilbert space of quantum gravity is locally finite-dimensional,
\newblock {\em Int. J. Mod. Phys.} {\bf D26}:1743013 (2017),
\newblock \href{https://arxiv.org/abs/1704.00066}{arXiv:1704.00066 [hep-th]}.

\bibitem{Bekenstein73}
J.~D.~Bekenstein,
\newblock Black holes and entropy,
\newblock {\em Phys. Rev.} {\bf D7}:2333 (1973).

\bibitem{Bekenstein81}
J.~D.~Bekenstein,
\newblock Universal upper bound on the entropy-to-energy ratio for bounded
  systems,
\newblock {\em Phys. Rev.} {\bf D23}:287 (1981).

\bibitem{Bousso99}
R.~Bousso,
\newblock A covariant entropy conjecture,
\newblock {\em JHEP} {\bf 07}:004 (1999),
\newblock \href{https://arxiv.org/abs/hep-th/9905177}{arXiv:hep-th/9905177}.

\bibitem{Bousso00}
R.~Bousso,
\newblock The holographic principle for general backgrounds,
\newblock {\em Class. Quant. Grav.} {\bf 17}:997 (2000),
\newblock \href{https://arxiv.org/abs/hep-th/9911002}{arXiv:hep-th/9911002}.

\bibitem{Jacobson95}
T.~Jacobson,
\newblock Thermodynamics of space-time: The {E}instein equation of state,
\newblock {\em Phys. Rev. Lett.} {\bf 75}:1260 (1995),
\newblock \href{https://arxiv.org/abs/gr-qc/9504004}{arXiv:gr-qc/9504004}.

\bibitem{Jacobson12}
T.~Jacobson,
\newblock Gravitation and vacuum entanglement entropy,
\newblock {\em Int. J. Mod. Phys.} {\bf D21}:1242006 (2012),
\newblock \href{https://arxiv.org/abs/1204.6349}{arXiv:1204.6349 [gr-qc]}.

\bibitem{Jacobson15}
T.~Jacobson,
\newblock {Entanglement equilibrium and the Einstein equation}.
\newblock {\em Phys. Rev. Lett.} {\bf 116}:201101 (2016),
\newblock \href{https://arxiv.org/abs/1204.6349}{arXiv:1204.6349 [gr-qc]}.

\bibitem{Jacobson19}
T.~Jacobson and M.~R.~Visser,
\newblock {Spacetime equilibrium at negative temperature and the attraction of gravity},
\newblock {\em Int. J. Mod. Phys.} {\bf D} (2019),
\newblock \href{https://arxiv.org/abs/1904.04843}{arXiv:1904.04843 [gr-qc]}.

\bibitem{Fewster13}
C.~J.~Fewster and R.~Verch,
\newblock {The necessity of the Hadamard condition},
\newblock {\em Class. Quant. Grav.} {\bf 30}:235027 (2013),
\newblock \href{https://arxiv.org/abs/1307.5242}{arXiv:1307.5242 [gr-qc]}.

\bibitem{Radzikowski96}
M.~J.~Radzikowski,
\newblock Micro-local approach to the Hadamard condition in quantum field theory on curved space-time,
\newblock {\em Commun. Math. Phys.} {\bf 179}:529 (1996).

\bibitem{Mattingly05}
D.~Mattingly,
\newblock Modern tests of Lorentz invariance,
\newblock {\em Living Rev. Rel.} {\bf 8}:5 (2005),
\newblock \href{https://arxiv.org/abs/gr-qc/0502097}{arXiv:gr-qc/0502097}.

\bibitem{Susskind95}
L.~Susskind,
\newblock The world as a hologram,
\newblock {\em J. Math. Phys.} {\bf 36}:6377 (1995),
\newblock \href{https://arxiv.org/abs/hep-th/9409089}{arXiv:hep-th/9409089}.

\bibitem{Bousso00b}
R.~Bousso,
\newblock Positive vacuum energy and the N-bound,
\newblock {\em JHEP} {\bf 0011}:038 (2000),
\newblock \href{https://arxiv.org/abs/hep-th/0010252}{arXiv:hep-th/0010252}.

\bibitem{Raasakka17}
M.~Raasakka,
\newblock Local Lorentz covariance in finite-dimensional local quantum physics,
\newblock {\em Phys. Rev.} {\bf D96}:086023 (2017),
\newblock \href{https://arxiv.org/abs/1705.06711}{arXiv:1705.06711 [gr-qc]}.

\bibitem{Torre93}
C.~G.~Torre,
\newblock Gravitational observables and local symmetries,
\newblock {\em Phys. Rev.} {\bf D48}:2373 (1993),
\newblock \href{https://arxiv.org/abs/gr-qc/9306030}{arXiv:gr-qc/9306030}.

\bibitem{Marolf15}
D.~Marolf,
\newblock Emergent gravity requires (kinematic) non-locality,
\newblock {\em Phys. Rev. Lett.} {\bf 114}:031104 (2015),
\newblock \href{https://arxiv.org/abs/1409.2509}{arXiv:1409.2509 [hep-th]}.

\bibitem{Giddings15}
S.~B.~Giddings,
\newblock Hilbert space structure in quantum gravity: an algebraic perspective,
\newblock {\em JHEP} {\bf 12}:099 (2015),
\newblock \href{https://arxiv.org/abs/1503.08207}{arXiv:1503.08207 [hep-th]}.

\bibitem{Donnelly16}
W.~Donnelly and S.~B.~Giddings,
\newblock Observables, gravitational dressing, and obstructions to locality
  and subsystems,
\newblock {\em Phys. Rev.} {\bf D94}:104038 (2016),
\newblock \href{https://arxiv.org/abs/1607.01025}{arXiv:1607.01025 [hep-th]}.

\bibitem{Carlip12}
S.~Carlip,
\newblock Challenges for emergent gravity,
\newblock {\em Stud. Hist. Phil. Sci.} {\bf B46}:200 (2014),
\newblock \href{https://arxiv.org/abs/1207.2504}{arXiv:1207.2504 [gr-qc]}.

\bibitem{Rovelli02}
C.~Rovelli,
\newblock Partial observables,
\newblock {\em Phys. Rev.} {\bf D65}:124013 (2002),
\newblock \href{https://arxiv.org/abs/gr-qc/0110035}{arXiv:gr-qc/0110035}.

\bibitem{Brunetti03}
R.~Brunetti, K.~Fredenhagen and R.~Verch,
\newblock The generally covariant locality principle -- A new paradigm for local quantum physics,
\newblock {\em Commun. Math. Phys.} {\bf 237}:31 (2003),
\newblock \href{https://arxiv.org/abs/math-ph/0112041}{arXiv:math-ph/0112041}.

\bibitem{Bannier94}
U.~Bannier,
\newblock Intrinsic algebraic characterization of space-time structure,
\newblock {\em Int. J. Theor. Phys.} {\bf 33}:1797 (1994).

\bibitem{Keyl96}
M.~Keyl,
\newblock Causal spaces, causal complements and their relations to quantum field theory,
\newblock {\em Rev. Math. Phys.} {\bf 8}:229 (1996).

\bibitem{Keyl98}
M.~Keyl,
\newblock How to describe the space-time structure with nets of {$C^*$}-algebras,
\newblock {\em Int. J. Theor. Phys.} {\bf 37}:375 (1998).

\bibitem{CorichiRyanSudarsky02}
A.~Corichi, M.~P.~Ryan, and D.~Sudarsky,
\newblock Quantum geometry as a relational construct,
\newblock {\em Mod. Phys. Lett.} {\bf A17}:555 (2002),
\newblock \href{https://arxiv.org/abs/gr-qc/0203072}{arXiv:gr-qc/0203072}.

\bibitem{SummersWhite03}
S.~J.~Summers and R.~White,
\newblock On deriving space-time from quantum observables and states,
\newblock {\em Commun. Math. Phys.} {\bf 237}:203 (2003),
\newblock \href{https://arxiv.org/abs/hep-th/0304179}{arXiv:hep-th/0304179}.

\bibitem{BertozziniContiLewkeeratiyutkul10}
P.~Bertozzini, R.~Conti and W.~Lewkeeratiyutkul,
\newblock Modular theory, non-commutative geometry and quantum gravity,
\newblock {\em SIGMA} {\bf 6}:067 (2010),
\newblock \href{https://arxiv.org/abs/1007.4094}{arXiv:1007.4094 [gr-qc]}.

\bibitem{Aguilar12}
P.~Aguilar, Y.~Bonder, C.~Chryssomalakos and D.~Sudarsky,
\newblock Operational geometry on de Sitter spacetime,
\newblock {\em Mod. Phys. Lett.} {\bf A27}:1250130 (2012),
\newblock \href{https://arxiv.org/abs/1205.0501}{arXiv:1205.0501 [gr-qc]}.

\bibitem{Cao16}
C.~Cao, S.M.~Carroll and S.~Michalakis,
\newblock Space from Hilbert space: Recovering geometry from bulk entanglement,
\newblock {\em Phys. Rev.} {\bf D95}:024031 (2017),
\newblock \href{https://arxiv.org/abs/1606.08444}{arXiv:1606.08444 [hep-th]}.

\bibitem{Raasakka16}
M.~Raasakka,
\newblock Spacetime-free approach to quantum theory and effective spacetime structure,
\newblock {\em SIGMA} {\bf 13}:006 (2017),
\newblock \href{https://arxiv.org/abs/1605.03942}{arXiv:1605.03942 [gr-qc]}.

\bibitem{Cotler17}
J.~S.~Cotler, G.~R.~Penington and D.~H.~Ranard,
\newblock Locality from the spectrum,
\newblock {\em Commun. Math. Phys.} {\bf 368}:1267 (2019),
\newblock \href{https://arxiv.org/abs/1702.06142}{arXiv:1702.06142 [quant-ph]}.

\bibitem{Fewster19}
C.~J.~Fewster and K.~Rejzner,
\newblock Algebraic Quantum Field Theory - an introduction,
\newblock \href{https://arxiv.org/abs/1904.04051}{arXiv:1904.04051 [hep-th]} (2019).

\bibitem{Haag96}
R.~Haag,
\newblock {\em Local Quantum Physics: Fields, Particles, Algebras},
\newblock Springer, 1996.

\bibitem{BarFredenhagen09}
C.~B\"ar and K.~Fredenhagen,
\newblock {\em Quantum Field Theory on Curved Spacetimes: Concepts and Mathematical Foundations},
\newblock Springer, 2009.

\bibitem{Buchholz87}
D.~Buchholz, C.~D'Antoni and K.~Fredenhagen,
\newblock The universal structure of local algebras,
\newblock {\em Commun. Math. Phys.} {\bf 111}:123 (1987).

\bibitem{Birkhoff48}
G.~Birkhoff,
\newblock {\em Lattice theory},
\newblock American Mathematical Society, 1948.

\bibitem{Davey02}
H.~A.~Priestley and B.~A.~Davey,
\newblock {\em Introduction to Lattices and Order (Second Edition)}.
\newblock Cambridge University Press, Cambridge, 2002.

\bibitem{Johnstone82}
P.~T.~Johnstone,
\newblock {\em Stone spaces},
\newblock Cambridge University Press, 1982.

\bibitem{Picado12}
J.~Picado and A.~Pultr,
\newblock {\em Frames and locales: Topology without points},
\newblock Springer, 2012.

\bibitem{Parikh05}
M.~K.~Parikh and E.~P.~Verlinde,
\newblock De Sitter holography with a finite number of states,
\newblock {\em JHEP} {\bf 0501}:054 (2005),
\newblock \href{https://arxiv.org/abs/hep-th/0410227}{arXiv:hep-th/0410227}.

\bibitem{Singh18}
A.~Singh and S.~M.~Carroll,
\newblock Modeling position and momentum in finite-dimensional Hilbert spaces via generalized Clifford algebra,
\newblock \href{https://arxiv.org/abs/1806.10134}{arXiv:1806.10134 [quant-ph]}, 2018.

\bibitem{Cao19}
C.~Cao, A.~Chatwin-Davies and A.~Singh,
\newblock How low can vacuum energy go when your fields are finite-dimensional?,
\newblock \href{https://arxiv.org/abs/1905.11199}{arXiv:1905.11199 [hep-th]}, 2019.

\end{thebibliography}
\end{document}